\theoremstyle{plain}
\newtheorem{thm}{\protect\theoremname}
  \theoremstyle{plain}
  \newtheorem{lem}[thm]{\protect\lemmaname}
  \theoremstyle{definition}
  \newtheorem{defn}[thm]{\protect\definitionname}
  \theoremstyle{definition}
  \newtheorem{example}[thm]{\protect\examplename}
\newenvironment{lyxlist}[1]
{\begin{list}{}
{\settowidth{\labelwidth}{#1}
 \setlength{\leftmargin}{\labelwidth}
 \addtolength{\leftmargin}{\labelsep}
 }}
{\end{list}}
  \providecommand{\definitionname}{Definition}
  \providecommand{\examplename}{Example}
  \providecommand{\lemmaname}{Lemma}
\providecommand{\theoremname}{Theorem}
\begin{document}

\title{Parity separation:\\
A scientifically proven method for permanent weight loss}

\author{Radu Curticapean\thanks{Simons Institute for the Theory of Computing, Berkeley, USA, and Institute for Computer Science and Control, Hungarian Academy of Sciences (MTA SZTAKI), Budapest, Hungary. Supported by ERC Grant 280152 PARAMTIGHT.}
\date{}}
\maketitle
\begin{abstract}

\global\long\def\FP{\mathsf{FP}}
\global\long\def\sharpP{\mathsf{\#P}}

\global\long\def\DP{\mathsf{P}}
\global\long\def\NP{\mathsf{NP}}
\global\long\def\coNP{\mathsf{coNP}}

\global\long\def\SETH{\mathsf{SETH}}
\global\long\def\ETH{\mathsf{ETH}}
\global\long\def\sharpETH{\mathsf{\#ETH}}
\global\long\def\sharpSETH{\mathsf{\#SETH}}
\global\long\def\rETH{\mathsf{rETH}}

\global\long\def\FPT{\mathsf{FPT}}
\global\long\def\Wone{\mathsf{W[1]}}
\global\long\def\sharpWone{\mathsf{\#W[1]}}
 \global\long\def\sharpWtwo{\mathsf{\#W[2]}}

\global\long\def\CeqP{\mathsf{C_{=}P}}
\global\long\def\GapP{\mathsf{GapP}}
\global\long\def\parityP{\mathsf{\oplus P}}
\global\long\def\PP{\mathsf{PP}}

\global\long\def\sharpSAT{\mathsf{\#SAT}}
\global\long\def\SAT{\mathsf{SAT}}
\global\long\def\colHS{\mathsf{ColHS}}

\global\long\def\HittingSet{\mathsf{HittingSet}}
\global\long\def\pClique{\mathsf{\#Clique}}
\global\long\def\pHittingSet{\mathsf{\#HittingSet}}
\global\long\def\A{\mathsf{A}}
\global\long\def\B{\mathsf{B}}

\global\long\def\perm{\mathrm{perm}}
\global\long\def\usat{\mathrm{usat}}
\global\long\def\PM{\mathcal{PM}}
\global\long\def\DM{\mathcal{DM}}
\global\long\def\M{\mathcal{M}}
\global\long\def\PerfMatch{\mathrm{PerfMatch}}
\global\long\def\MatchSum{\mathrm{MatchSum}}

\global\long\def\Holant{\mathrm{Holant}}
\global\long\def\Sig{\mathrm{Sig}}
\global\long\def\supp{\mathrm{supp}}
\global\long\def\hw{\mathrm{hw}}
\global\long\def\val{\mathrm{val}}
\global\long\def\one{\mathtt{1}}
\global\long\def\zero{\mathtt{0}}

\global\long\def\cw{\mathrm{cw}}
\global\long\def\pw{\mathrm{pw}}
\global\long\def\tw{\mathrm{tw}}
\global\long\def\isol{\mathrm{isol}}
\global\long\def\crossnum{\mathrm{cr}}

\global\long\def\leqPoly{\leq_{\mathit{p}}}
\global\long\def\leqPolyT{\leq_{\mathit{p}}^{T}}
\global\long\def\leqPolyPar{\leq_{p}^{\mathit{pars}}}

\global\long\def\leqFpt{\leq_{\mathit{fpt}}}
\global\long\def\leqFptT{\leq_{\mathit{fpt}}^{T}}
\global\long\def\leqFptPar{\leq_{\mathit{fpt}}^{\mathit{pars}}}
\global\long\def\leqSERF{\leq_{\mathit{serf}}}
\global\long\def\leqPLin{\leq_{p}^{\mathit{lin}}}
\global\long\def\leqFptLin{\leq_{\mathit{fpt}}^{\mathit{lin}}}

\global\long\def\Matchings{\mathcal{M}}
\global\long\def\UCWs{\mathcal{U}}
\global\long\def\Cycles{\mathcal{C}}
\global\long\def\PathCycles{\mathcal{PC}}
\global\long\def\modulo{\,\mathrm{mod}\,}
\global\long\def\vec#1{\mathbf{#1}}
\global\long\def\H{\mathcal{H}}

\global\long\def\EQ{\mathtt{EQ}}
\global\long\def\sigHW#1{\mathtt{HW_{#1}}}
\global\long\def\sigEven{\mathtt{EVEN}}
\global\long\def\sigOdd{\mathtt{ODD}}

\global\long\def\todo#1{\mbox{{\bf TODO:}\,#1}\ }

Given an edge-weighted graph $G$, let $\PerfMatch(G)$
be the following weighted sum that ranges over all perfect matchings
$M$ in $G$:

\[
\PerfMatch(G):=\sum_{M}\prod_{e\in M}w(e).
\]
If $G$ is unweighted, this plainly counts the perfect matchings of
$G$.

In this paper, we introduce\emph{ parity separation}, a new method
for reducing $\PerfMatch$ to unweighted instances: For graphs $G$
with edge-weights $\pm1$, we construct two unweighted graphs $G_{1}$
and $G_{2}$ such that 
\[
\PerfMatch(G)=\PerfMatch(G_{1})-\PerfMatch(G_{2}).
\]
This yields a novel weight removal technique for counting perfect
matchings, in addition to those known from classical $\sharpP$-hardness
proofs. We derive the following applications:
\begin{enumerate}
\item An alternative $\sharpP$-completeness proof for counting unweighted
perfect matchings.{\small \par}
\item $\CeqP$-completeness for deciding whether two given unweighted graphs
have the same number of perfect matchings. To the best of our knowledge,
this is the first $\CeqP$-completeness result for the ``equality-testing
version'' of any natural counting problem that is not already $\sharpP$-hard
under parsimonious reductions.{\small \par}
\item An alternative tight lower bound for counting unweighted perfect matchings
under the counting exponential-time hypothesis $\sharpETH$.{\small \par}
\end{enumerate}
Our technique is based upon matchgates and the Holant framework. To
make our $\sharpP$-hardness proof self-contained, we also apply matchgates
for an alternative $\sharpP$-hardness proof of $\PerfMatch$ on graphs
with edge-weights $\pm1$.
\end{abstract}
\clearpage{}

\section{Introduction}

The problem of counting perfect matchings has played a central role
in counting complexity since Valiant \cite{Valiant1979a} introduced
the class $\sharpP$ and established $\sharpP$-completeness of counting
perfect matchings in unweighted bipartite graphs. This problem was
previously already considered in statistical physics \cite{Temperley.Fisher1961,Kasteleyn1961,Kasteleyn1967},
and Valiant's computational hardness result explains the lack of progress
encountered in this area for finding efficient algorithms for counting
perfect matchings.

As complexity theorists, we can appreciate this seminal $\sharpP$-completeness
result from another perspective: The problem of counting perfect matchings
in \emph{unweighted} graphs presented the first example of a natural
hard counting problem with an easy decision version, since Edmond's
classical algorithm \cite{Edmonds1987} allows to decide in polynomial
time whether a graph contains \emph{at least one} perfect matching.
This showed exemplarily that the complexity-theoretic study of counting
problems amounts to more than merely checking whether $\NP$-hardness
proofs for decision problems carry over to their counting versions.

For instance, a fundamental peculiarity of counting problems that
is not shared by decision problems are \emph{cancellations}: In (weighted)
counting problems, witness structures may cancel each other out, and
this can have strong effects on the complexity of the problem. The
most prominent example of this phenomenon might be the situation of
the \emph{determinant} and the \emph{permanent}, both summing over
the same permutations, however with different weights. This results
in the permanent being $\sharpP$-complete by Valiant's result, whereas
the determinant can be computed in polynomial time. The \emph{accidental
}and \emph{holographic }algorithms introduced by Valiant \cite{Valiant2006,Valiant2008}
provide examples for further and more unexpected cancellations that
render counting problems easy.

However, cancellations are also crucial for negative results: In many
$\sharpP$-hardness proofs, such as \cite{Blaeser.Dell2007,Blaeser.Curtican2011,Blaeser.Curtican2012},
we first define an intermediate variant of the target problem on weights
$\pm1$. Examples for this strategy include the original reduction
from $\sharpSAT$ to counting unweighted perfect matchings \cite{Valiant1979a}:
In this setting, let $G$ be a graph with edge-weights $w:E(G)\to\{-1,1\}$,
let $\PM[G]$ denote its set of perfect matchings, and define 
\begin{equation}
\PerfMatch(G):=\sum_{M\in\PM[G]}\prod_{e\in M}w(e).\label{eq: Def PerfMatch}
\end{equation}

Given an instance to this weighted problem, that is, a graph $G$
derived from a $3$-CNF formula, its space of witness structures $\PM[G]$
can then be partitioned into ``good'' structures that correspond
to satisfying assignments, and ``bad'' structures that could be
called combinatorial noise. By careful construction of such a graph
$G$ on edge-weights $\pm1$, we can ensure that bad structures come
in pairs of weight $+1$ and $-1$, thus canceling out, whereas good
structures all have weight $+1$.

To conclude $\sharpP$-completeness of counting unweighted perfect
matchings, it remains to simulate the weight $-1$ from the intermediate
problem. This can be achieved by several techniques, which we survey
soon. Let us however first point out that the main contribution of
this paper is a novel technique for precisely this part of the reduction:
Using a method we call \emph{parity separation}, we reduce the computation
of $\PerfMatch(G)$ for a $\pm1$-weighted graph $G$ to the difference
of $\PerfMatch$ for two unweighted graphs, that is, to the difference
of two numbers of perfect matchings.
\begin{lem}[Parity Separation]
\label{main lem: Difference technique}Let $G$ be a graph on $n$
vertices and $m$ edges that is weighted by a function $w:E(G)\to\{-1,1\}$.
Then we can construct in time $\mathcal{O}(n+m)$ two unweighted graphs
$G_{1}$ and $G_{2}$, each on $\mathcal{O}(n+m)$ vertices and edges,
such that 
\begin{equation}
\PerfMatch(G)=\PerfMatch(G_{1})-\PerfMatch(G_{2}).\label{eq: Difference}
\end{equation}

\end{lem}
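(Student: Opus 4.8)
\emph{Proof plan.} Write $E^{-}=w^{-1}(-1)$ for the set of negative edges, and let $E$ (resp.\ $O$) be the number of perfect matchings $M\in\PM[G]$ with $|M\cap E^{-}|$ even (resp.\ odd). Since $M$ contributes $\prod_{e\in M}w(e)=(-1)^{|M\cap E^{-}|}$, we have $\PerfMatch(G)=E-O$. As $E+O=|\PM[G]|=\PerfMatch(G^{\ast})$, where $G^{\ast}$ is $G$ with all weights reset to $1$, this rearranges to $\PerfMatch(G)=\PerfMatch(G^{\ast})-2O$. So everything reduces to building one \emph{unweighted} graph $H$ with $\PerfMatch(H)=O$: taking $G_{1}:=G^{\ast}$ and $G_{2}:=H\sqcup C_{4}$ then settles \eqref{eq: Difference}, because the $4$-cycle has two perfect matchings and disjoint union multiplies $\PerfMatch$, so $\PerfMatch(G_{2})=2O$. (The ``deletion/contraction'' identity $\PerfMatch(G)=\PerfMatch(G-e)-\PerfMatch(G-\{u,v\})$ for a negative edge $e=uv$ does not by itself suffice: iterating it over $E^{-}$ produces $2^{|E^{-}|}$ unweighted graphs, and no local graph operation \emph{adds} numbers of perfect matchings, so these cannot be folded back into two. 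The real content is that the parity class $O$ is \emph{directly} realizable.)

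To construct $H$, I would work in the Holant framework: view $\PerfMatch(G^{\ast})$ as $\Holant$ of $G^{\ast}$ with the ``exactly one incident edge'' signature at every vertex, and then install an unweighted parity constraint on $E^{-}$. Replace each $e_{i}=u_{i}v_{i}\in E^{-}$ by a constant-size unweighted gadget that still acts like the edge $e_{i}$ on $\{u_{i},v_{i}\}$ but exposes an extra dangling half-edge ``witnessing'' whether $e_{i}$ is used; route all witness wires into a balanced binary tree of constant-size XOR gadgets; and pin the single output wire of the tree to ``used'' by a one-vertex gadget. An arity-$3$ XOR gadget is easy to exhibit unweighted: take a triangle $abm$ with a pendant edge $mc$ and one dangling half-edge at each of $a,b,c$; a short case check shows its signature equals $\sigEven$ on three inputs, i.e.\ $d_{c}=d_{a}\oplus d_{b}$. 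By the composition rule for Holant sums the gadgets glue to the indicator of ``$|M\cap E^{-}|$ odd'', so $\PerfMatch(H)=O$. All gadgets are unweighted, so $G_{1}$ and $G_{2}$ are unweighted; each negative edge costs $\mathcal{O}(1)$ vertices and edges and the parity tree has $\mathcal{O}(|E^{-}|)$ of them, so $G_{1},G_{2}$ have $\mathcal{O}(n+m)$ vertices and edges and are writable in time $\mathcal{O}(n+m)$.

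The technical heart --- and what I expect to be the genuine obstacle --- is the \emph{witness gadget}: a constant-size unweighted matchgate that behaves like $e_{i}=u_{i}v_{i}$ yet reports ``$e_{i}\in M$'' on output wires. The naive candidates are not matchgate signatures. With a single output wire the required tensor has mixed Hamming parity, hence is unrealizable. With two output wires, any choice that genuinely distinguishes ``$e_{i}$ used'' from ``$e_{i}$ unused'' is an arity-$4$, parity-uniform signature whose support has exactly two elements (an ``equality''-type tensor), and one checks that every such tensor violates a matchgate identity, so it too is unrealizable. One therefore has to encode the witness in a more subtle, Plücker-compatible way --- a somewhat larger gadget, or extracting the parity of $|M\cap E^{-}|$ from a cleverly chosen family of features of $M$ rather than from the $e_{i}$ one at a time --- and then certify, via the matchgate identities and the Holant composition lemma, that everything assembles into the intended $\{0,1\}$-valued global signature. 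This is precisely where the matchgate machinery of the paper earns its keep; once the gadget is in place, the arithmetic identity above and the $\mathcal{O}(n+m)$ bounds are routine.
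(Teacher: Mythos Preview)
Your plan is the right shape --- separate perfect matchings by the parity of $|M\cap E^{-}|$ and realize the parity constraint with an unweighted $\sigEven/\sigOdd$ matchgate tree --- but the proof as written has exactly the gap you flag in your last paragraph and never close. You correctly argue that a per-edge ``witness'' gadget reporting ``$e_{i}\in M$'' on one or two extra wires is not a matchgate signature (mixed parity in the first case, a violated Grassmann--Pl\"ucker identity for the $\EQ_{4}$-type tensor in the second). You then defer the resolution to ``the matchgate machinery of the paper''. That is the missing idea, so the proposal is not a proof.

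The paper does \emph{not} build an edge-usage witness at all; it sidesteps the obstruction by a detour through $\MatchSum$. Each negative edge is replaced by a constant-size gate $\Gamma$ whose vertices carry the signatures $\mathtt{VTX}_{w}$ for $w\in\{-1,0,1\}$ (so they may be left unmatched), yielding a vertex-weighted graph $G'$ with $\PerfMatch(G)=\MatchSum(G')$. In $\MatchSum$ the sign of a matching is $(-1)$ to the number of \emph{unmatched} weight-$(-1)$ vertices, and ``vertex $v$ is unmatched'' is trivially witnessable: add one edge from $v$ to a hub vertex, and that hub edge is active precisely when $v$ was unmatched. Two hubs $u_{-1},u_{1}$ (for $V_{-1}$ and $V_{1}$) equipped with $\sigEven$ or $\sigOdd$ give signature graphs $\Phi_{00},\Phi_{11}$ with $\MatchSum(G')=\Holant(\Phi_{00})-\Holant(\Phi_{11})$; both are then realized by unweighted matchgates exactly as in your tree construction. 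The passage from edge-parity to vertex-unmatched-parity is precisely what dodges the matchgate obstruction you diagnosed; without that move, the phrase ``a cleverly chosen family of features of $M$'' in your sketch is not specified enough to count as a proof.
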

Intuitively speaking, this allows us to ``collect'' positive and
negative terms of $\PerfMatch(G)$ for $\pm1$-weighted graphs. This
way, we can reduce the effect of cancellations incurred \emph{within}
$\PerfMatch$ to a mere difference \emph{outside} of $\PerfMatch$.

In the remainder of this introduction, we present parity separation
in more detail and demonstrate three applications that can be derived
from it: Firstly, and not surprisingly, we obtain a new $\sharpP$-completeness
proof for counting perfect matchings. Secondly, we can show $\CeqP$-completeness
of deciding whether two graphs have the same number of perfect matchings.
Thirdly, we also obtain tight lower bounds under the exponential-time
hypothesis.

\subsection{\label{sub: intro sharpP}$\protect\sharpP$-completeness via parity
separation}

To put parity separation into context, we first recapitulate Valiant's
$\sharpP$-hardness result for counting perfect matchings in more
detail. Let us denote the problem of evaluating $\PerfMatch$ on graphs
with edge-weights from $A\subseteq\mathbb{Q}$ by $\PerfMatch^{A}$.
For consistency with \cite{Dell.Husfeldt2014}, we include $0\in A$.

\subsubsection*{First step: From $\protect\sharpSAT$ to $\protect\PerfMatch^{-1,0,1}$}

It is shown in \cite[Lemma 3.1]{Valiant1979a} that $\PerfMatch^{W}$
is $\sharpP$-hard for $W:=\{-1,0,1,2,3\}$. More precisely, from
a $3$-CNF formula $\varphi$, a number $t(\varphi)\in\mathbb{N}$
and a bipartite graph $G=G(\varphi)$ on weights $W$ are constructed
in polynomial time, such that 
\begin{equation}
\sharpSAT(\varphi)=\frac{\PerfMatch(G)}{4^{t(\varphi)}},\label{eq: SAT to PerfMatch(1,-1)}
\end{equation}

This however only yields hardness for a weighted generalization of
counting perfect matchings. To obtain a useful reduction source for
further problems, it is crucial to reduce $\PerfMatch^{W}$ to $\PerfMatch^{0,1}$,
as reductions from $\PerfMatch$ to other problems would otherwise
need to take care of the weights in $W$, which is particularly problematic
for the edge-weight $-1$ in the case of unweighted reduction targets. 

In fact, the weight $-1$ is the only problem we encounter: Edges
$e$ of positive integer edge-weight $w$ can be simulated easily
by replacing $e$ with $w$ parallel edges of unit weight, possibly
subdividing edges twice to obtain simple graphs. This trick however
does not apply for the weight $-1$, so we need a different strategy.

\subsubsection*{Second step: Removing the edge-weight $-1$}

By now, two different strategies are known for this step, which we
briefly survey in the following. Let $G$ be a graph with $n$ vertices
and $m>0$ edges, all on weights $-1$ and $1$.
\begin{enumerate}
\item \textbf{Modular arithmetic:} Essentially the following approach was
originally used by Valiant \cite{Valiant1979a} and later refined
by Zankó \cite{DBLP:journals/ijfcs/Zanko91}: Write $M=2^{m}+1$ and
observe that $\PerfMatch(G)<M$. We can hence replace the weight $-1$
by the positive integer $M-1$ to obtain a graph $G'$ satisfying
$\PerfMatch(G)\equiv\PerfMatch(G')$ modulo $M$. The weight $M-1$
can be simulated by a gadget as in the previous paragraph, and using
a more involved construction \cite{DBLP:journals/ijfcs/Zanko91},
it can be seen that a gadget on $\mathcal{O}(m)$ vertices and edges
suffices, yielding a total number of $\mathcal{O}(nm)$ vertices and
$\mathcal{O}(m^{2})$ edges in $G'$. Then we compute $\PerfMatch(G')$
modulo $M$ and obtain $\PerfMatch(G)$, as we may assume from (\ref{eq: SAT to PerfMatch(1,-1)})
that $\PerfMatch(G)\geq0$. In total, we obtain one reduction image
for $\PerfMatch^{0,1}$ on $\mathcal{O}(nm)$ vertices and $\mathcal{O}(m^{2})$
edges.
\item \textbf{Polynomial interpolation:} An alternative technique for removing
the edge-weight $-1$ from $G$ is to replace it by an indeterminate
$x$. This gives rise to a graph $G_{x}$ on edge-weights $\{1,x\}$
for which $\PerfMatch(G_{x})$ is a polynomial $p(x)\in\mathbb{Z}[x]$
of degree at most $n/2$. We can evaluate $p(i)$ for $i\in\{0,\ldots,n/2\}$
by substituting $x\gets i$ in $G_{x}$ and simulating this positive
weight by a gadget as discussed before. This allows us to recover
$p(-1)=\PerfMatch(G)$ via Lagrangian interpolation. In total, using
gadgets as in \cite{DBLP:journals/ijfcs/Zanko91,Dell.Husfeldt2014},
we obtain $\mathcal{O}(n)$ reduction images for $\PerfMatch^{0,1}$
on $\mathcal{O}(n\log m)$ vertices and $\mathcal{O}(n\log m+m)$
edges each.
\end{enumerate}
Both weight removal techniques allow to reduce $\PerfMatch^{-1,0,1}$
to $\PerfMatch^{0,1}$ and thus complete the $\sharpP$-completeness
proof of the latter problem. Note however that both approaches map
weighted graphs $G$ with $m$ edges to unweighted graphs with a superlinear
number of edges. Using parity separation, we obtain a third way of
performing the weight removal step, which differs substantially from
both approaches mentioned before and features only constant blowup:
\begin{enumerate}
\item[3.] \textbf{Parity separation:} Using Lemma~\ref{main lem: Difference technique},
compute two unweighted graphs $G_{1}$ and $G_{2}$ from $G$ such
that $\PerfMatch(G)$ is the mere difference of $\PerfMatch(G_{1})$
and $\PerfMatch(G_{2})$. In total, we obtain $2$ reduction images
for $\PerfMatch^{0,1}$ on $\mathcal{O}(n+m)$ vertices and edges.
\end{enumerate}
Together with the first step, this implies an alternative $\sharpP$-completeness
proof for $\PerfMatch^{0,1}$. We note that, for sake of completeness,
we will later also give a self-contained proof of the first step.
\begin{thm}
\label{thm: sharpP-completeness}The problem $\PerfMatch^{0,1}$ is
$\sharpP$-complete under polynomial-time Turing reductions.
\end{thm}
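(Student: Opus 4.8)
The plan is to assemble Theorem~\ref{thm: sharpP-completeness} from the two ingredients already laid out in the excerpt, using a Turing reduction from $\sharpSAT$ to $\PerfMatch^{0,1}$.

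First, I would invoke the first step of Valiant's reduction, that is, equation~(\ref{eq: SAT to PerfMatch(1,-1)}): given a $3$-CNF formula $\varphi$, we obtain in polynomial time a number $t(\varphi)\in\mathbb{N}$ and a bipartite graph $G=G(\varphi)$ on edge-weights $\{-1,0,1,2,3\}$ with $\sharpSAT(\varphi)=\PerfMatch(G)/4^{t(\varphi)}$. The positive integer weights $2$ and $3$ are harmless: each edge of weight $w\in\{2,3\}$ is replaced by $w$ parallel edges of unit weight, and each resulting parallel class (and each weight-$0$ edge, which is simply deleted) can be split by double subdivision to keep the graph simple; this costs only a constant blowup per edge. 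The result is a graph $G'$ on edge-weights $\{-1,1\}$ with $\PerfMatch(G')=\PerfMatch(G)$ and with $\mathcal{O}(|V(G)|+|E(G)|)$ vertices and edges. Since~(\ref{eq: SAT to PerfMatch(1,-1)}) guarantees $\PerfMatch(G')\geq 0$, recovering $\sharpSAT(\varphi)$ from $\PerfMatch(G')$ is just a division by $4^{t(\varphi)}$.

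Second, I would apply Lemma~\ref{main lem: Difference technique} (Parity Separation) to the $\pm1$-weighted graph $G'$: in linear time we build two \emph{unweighted} graphs $G_1$ and $G_2$, each of size $\mathcal{O}(|V(G')|+|E(G')|)$, with $\PerfMatch(G') = \PerfMatch(G_1) - \PerfMatch(G_2)$. Querying a $\PerfMatch^{0,1}$ oracle on $G_1$ and $G_2$ therefore yields $\PerfMatch(G')$, hence $\sharpSAT(\varphi)$ after dividing by $4^{t(\varphi)}$. The whole procedure makes two oracle calls and runs in polynomial time, so $\sharpSAT \leqPolyT \PerfMatch^{0,1}$; since $\sharpSAT$ is $\sharpP$-complete and $\PerfMatch^{0,1}$ is clearly in $\sharpP$ (the perfect matchings of an unweighted graph are verifiable in polynomial time), $\PerfMatch^{0,1}$ is $\sharpP$-complete under polynomial-time Turing reductions.

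**The only real content** is Lemma~\ref{main lem: Difference technique}, which is assumed here; everything else is bookkeeping about graph size and the routine simulation of small positive weights by parallel edges. I would remark that the novelty relative to the classical argument is purely in this weight-removal step: modular arithmetic and polynomial interpolation are replaced by a single subtraction performed \emph{outside} the oracle, which is why only two oracle calls and a constant-factor size blowup are needed. If one also wants the proof to be self-contained, one would additionally reprove~(\ref{eq: SAT to PerfMatch(1,-1)}) via matchgates as the paper promises, but for the statement as given, citing Valiant's first step suffices.
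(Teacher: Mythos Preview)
Your proposal is correct and follows exactly the blueprint the introduction sketches: reduce $\sharpSAT$ to $\PerfMatch^{-1,0,1}$ (first step), then apply parity separation (Lemma~\ref{main lem: Difference technique}) to land in $\PerfMatch^{0,1}$ with two oracle calls.

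The one difference worth flagging is in the first step. The paper's actual proof in Section~\ref{sub: Application sharpP} does \emph{not} cite Valiant's reduction~(\ref{eq: SAT to PerfMatch(1,-1)}); instead it takes the self-contained route via the Holant framework: it encodes $\sharpSAT$ as a Holant problem with even signatures (Lemma~\ref{lem: Holant SAT}), realizes those signatures by matchgates (Lemma~\ref{lem: Realize every Holant}) to obtain an instance of $\PerfMatch^{B}$ with $B=\{-1,0,\nicefrac{1}{2},1\}$, clears the weight $\nicefrac{1}{2}$ via Lemma~\ref{lem: Fractional weights in PerfMatch}, and only then applies parity separation. This is packaged as Lemma~\ref{lem: SAT =00003D difference}, which yields $2^{T}\cdot\sharpSAT(\varphi)=\PerfMatch(G_{1})-\PerfMatch(G_{2})$ rather than your $4^{t(\varphi)}$ factor. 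You already anticipated this alternative in your final remark; the trade-off is simply that the paper's route avoids the external citation at the cost of invoking its own matchgate machinery, while your route is shorter if one is willing to import Valiant's first step as a black box. The second step---parity separation---is identical in both.
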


\subsection{\label{sub: intro CeqP}$\protect\CeqP$-completeness via parity
separation}

Apart from an alternative $\sharpP$-completeness proof, Lemma~\ref{main lem: Difference technique}
also yields implications for the structural complexity of $\PerfMatch$:
We show that deciding whether two unweighted graphs have the same
number of perfect matchings is complete for the complexity class $\CeqP$
introduced in \cite{SimonThesis} and elaborated in \cite{DBLP:journals/sigact/HemaspaandraV95,Hemaspaandra.Ogihar2002}.

To define $\CeqP$, let us associate the following language $\A_{=}$
with each counting problem $\A\in\sharpP$: The inputs to $\A_{=}$
are pairs $(x,y)$ of instances to $\A$, and we are asked to determine
whether $\A(x)=\A(y)$ holds. We can then define\footnote{We deviate here from the standard definition of $\CeqP$, according
to which we have $L\in\CeqP$ iff there is a polynomial-time nondeterministic
Turing machine $M$ such that $x\in L$ iff the numbers of accepting
and rejecting computation paths of $M(x)$ are equal. It can be verified
easily that this is equivalent to our definition.} the class 
\[
\CeqP:=\{\A_{=}\mid\A\in\sharpP\}.
\]

For instance, it is clear that $\sharpSAT_{=}$, the problem that
asks whether two 3-CNF formulas have the same number of satisfying
assignments, is $\CeqP$-complete under polynomial-time many-one reductions.
In fact, $\CeqP$-completeness holds for every problem $\mathsf{A}_{=}$
whose counting version $\mathsf{\#A}$ is $\sharpP$-complete under
parsimonious reductions. We recall the notion of parsimonious (and
other) reductions in Definition~\ref{def: Reductions}.

The relationship between $\CeqP$ and other complexity classes has
been studied in structural complexity theory, and several results
are surveyed in \cite{Hemaspaandra.Ogihar2002}. For instance, we
clearly have $\mathsf{coNP}\subseteq\CeqP$, and using the witness
isolation technique \cite{Valiant.Vazirani1986}, we see that $\NP$
is contained in $\CeqP$ under randomized reductions. Let us also
observe that $\NP^{\sharpP}\subseteq\NP^{\CeqP}$: Whenever we issue
an oracle call to $\sharpP$, we may instead guess the output number,
and then check whether we guessed correctly by using the $\CeqP$
oracle. 

To the best of the author's knowledge, no \emph{natural} $\CeqP$-complete
problem $\A_{=}$ is known whose counting version $\A$ is \emph{not}
$\sharpP$-complete under parsimonious reductions.\footnote{Here, we stressed \emph{natural}, because we can easily construct
artificial $\CeqP$-complete problems $\mathsf{A_{=}}$ whose counting
version $\mathsf{\#A}$ admits no parsimonious reduction from $\sharpSAT$:
Consider as an example the counting problem $\mathsf{\#SAT'}$ that
asks to count satisfying assignments, incremented by $1$. If $\mathsf{\#SAT'}$
had a parsimonious reduction from $\sharpSAT$, then every CNF-formula
would be satisfiable. On the other hand, the reduction from $\sharpSAT_{=}$
to $\mathsf{\#SAT'_{=}}$ is trivial. } It is clear that the problem $\PerfMatch^{0,1}$ of counting unweighted
perfect matchings cannot be $\sharpP$-complete under parsimonious
reductions, unless $\mathsf{P}=\NP$. Therefore, the following completeness
result for $\PerfMatch_{=}^{0,1}$ seems relevant for structural complexity
theory, as it establishes a $\CeqP$-variant of Valiant's result.
\begin{thm}
\label{main thm: CeqP}The problem $\PerfMatch_{=}^{0,1}$ is $\CeqP$-complete
under polynomial-time many-one reductions: Decide whether unweighted
graphs $G_{1}$ and $G_{2}$ have the same number of perfect matchings.
\end{thm}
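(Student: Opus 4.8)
The membership $\PerfMatch_{=}^{0,1}\in\CeqP$ is immediate. Counting perfect matchings in an unweighted graph lies in $\sharpP$ — guess an edge set, check in polynomial time that it is a perfect matching, accept iff it is — so $\PerfMatch_{=}^{0,1}$ is, by definition, the equality-language of a $\sharpP$ function. For the hardness direction I would reduce $\sharpSAT_{=}$ to $\PerfMatch_{=}^{0,1}$; since $\sharpSAT_{=}$ is $\CeqP$-complete under $\leqPoly$, this suffices.

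The idea is to postpone parity separation until the whole comparison has been turned into a single weighted graph. Given $3$-CNF formulas $\varphi_{1},\varphi_{2}$ (with disjoint variable sets, without loss of generality), I would build the $3$-CNF formula $\psi$ over $\{z\}\cup\mathrm{Var}(\varphi_{1})\cup\mathrm{Var}(\varphi_{2})$ whose clauses express ``$z\Rightarrow\varphi_{1}$'', ``$\bar{z}\Rightarrow\varphi_{2}$'', ``$z\Rightarrow$ every variable of $\varphi_{2}$ is false'', and ``$\bar{z}\Rightarrow$ every variable of $\varphi_{1}$ is false''. Then the satisfying assignments of $\psi$ with $z$ true are in bijection with those of $\varphi_{1}$, and those with $z$ false with those of $\varphi_{2}$. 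Now run the first step of the hardness reduction — which this paper re-derives via matchgates and which therefore accommodates $\{-1,0,1\}$-weighted clause and variable signatures — on $\psi$, while additionally attaching to the selector variable $z$ a tiny signature gadget contributing the factor $+1$ when $z$ is true and $-1$ when $z$ is false. This produces, in polynomial time, a simple graph $G^{*}$ with edge-weights in $\{-1,1\}$ such that
\[
\PerfMatch(G^{*})=N^{*}\cdot\bigl(\sharpSAT(\varphi_{1})-\sharpSAT(\varphi_{2})\bigr)
\]
for some positive integer $N^{*}$ (a power of $4$). Applying Lemma~\ref{main lem: Difference technique} to $G^{*}$ yields two unweighted graphs $H_{1},H_{2}$ of polynomial size with $\PerfMatch(G^{*})=\PerfMatch(H_{1})-\PerfMatch(H_{2})$, and the reduction outputs the pair $(H_{1},H_{2})$. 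Correctness is the chain
\[
\sharpSAT(\varphi_{1})=\sharpSAT(\varphi_{2})\iff\PerfMatch(G^{*})=0\iff\PerfMatch(H_{1})=\PerfMatch(H_{2}).
\]

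The only step that requires care is the construction of $G^{*}$: one must verify that inserting the $\pm1$ selector gadget does not disturb the combinatorial bookkeeping of the first step, i.e.\ that the ``junk'' perfect matchings created by the gadgets still occur in cancelling $\pm1$ pairs, so that only genuine satisfying assignments contribute — and now contribute with the sign dictated by $z$. I expect this to be routine once the matchgate machinery for Theorem~\ref{thm: sharpP-completeness} is in place, since that framework is designed to evaluate $\{-1,0,1\}$-weighted instances and a unary $\pm1$ signature is among the simplest realizable matchgates; the residual chores (removing the positive weights $2,3$ of the classical construction by parallel edges and double subdivisions, and keeping all graphs simple) are standard. It is worth stressing that, in contrast with the weight-removal step behind Theorem~\ref{thm: sharpP-completeness}, this argument never has to produce the \emph{sum} of two perfect-matching counts as a single graph: the one subtraction we need is manufactured inside the weighted construction, and parity separation then merely transfers the predicate ``$\PerfMatch(G^{*})=0$'' to an equality between two unweighted perfect-matching counts.
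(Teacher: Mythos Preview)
Your proposal is correct and follows essentially the same route as the paper: combine the two formulas via a selector variable, attach a $\pm1$ signature to that selector (the paper calls it $\mathtt{EQ}_{-}$), realize the resulting signature graph as a $\pm1$-weighted $\PerfMatch$ instance via the matchgate machinery, and finish with parity separation. The only cosmetic differences are that the paper pads $\varphi_{1},\varphi_{2}$ to share a common variable set (so the selector-augmented clauses are simply $(x^{*}\vee C_{i})$ and $(\neg x^{*}\vee C'_{i})$) rather than using disjoint variables with forcing clauses, and its matchgate realization yields weight $\nicefrac{1}{2}$ (cleared via Lemma~\ref{lem: Fractional weights in PerfMatch}) rather than Valiant's weights $2,3$.
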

To prove this theorem, we first reduce instances $(\varphi,\varphi')$
for $\sharpSAT_{=}$ to $\pm1$-weighted graphs $G$ that satisfy
$\PerfMatch(G)=0$ iff $\sharpSAT(\varphi)=\sharpSAT(\varphi')$.
This requires a modification of the first step in the $\sharpP$-hardness
reduction, which is however supported easily by our alternative proof.
Then we apply Lemma~\ref{main lem: Difference technique} on the
graph $G$ to obtain unweighted graphs $G_{1}$ and $G_{2}$ satisfying
(\ref{eq: Difference}). In particular, their numbers of perfect matchings
agree iff $\PerfMatch(G)$ vanishes, that is, iff $(\varphi,\varphi')$
is a yes-instance for $\sharpSAT_{=}$.

To conclude this subsection, we note that the complexity of a similar
problem was posed as an open question in \cite{Chen2010}: Given two
directed acyclic graphs, decide whether their numbers of topological
orderings agree. It was shown in \cite{Brightwell.Winkler1991} that
counting topological orderings is $\sharpP$-complete under Turing
reductions, but the decision version is trivial for acyclic graphs.
Our result for $\PerfMatch_{=}^{0,1}$ might be useful to prove $\CeqP$-completeness
for this and other problems.

\subsection{\label{sub: intro ETH}Tight lower bounds via parity separation}

We turn our attention to conditional \emph{quantitative} lower bounds:
It is a recent trend in computational complexity to make use of assumptions
stronger than $\DP\neq\NP$ or $\FP\neq\sharpP$ to prove tight (exponential)
lower bounds on the running times needed to solve computational problems.
A popular such assumption is the exponential-time hypothesis $\ETH$,
introduced by Impagliazzo et al.~\cite{Impagliazzo.Paturi2001,Impagliazzo.Paturi2001a},
which states that the satisfiability of $n$-variable formulas $\varphi$
in $3$-CNF cannot be decided in time $2^{o(n)}$. For counting problems,
an analogous variant $\sharpETH$ was introduced by Dell et al.~\cite{Dell.Husfeldt2014},
and it postulates the same for the problem of \emph{counting} satisfying
assignments to $\varphi$.

Assuming $\ETH$, it was shown for a vast body of popular decision
problems that the known exponential-time exact algorithms are somewhat
optimal: For instance, there is a trivial $2^{\mathcal{O}(m)}$ time
algorithm for finding a Hamiltonian cycle (or various other structures)
in an $m$-edge graph, but $2^{o(m)}$ time algorithms would refute
$\ETH$. See \cite{Lokshtanov.Marx2011a} for an accessible survey.

Similar lower bounds were shown for counting problems under $\sharpETH$,
see \cite{Hoffmann2010,Husfeldt.Taslam2010,Dell.Husfeldt2014}, and
a very recent paper \cite{DBLP:conf/icalp/Curticapean15} introduced
\emph{block interpolation}, an approach to make the technique of polynomial
interpolation (as seen in the second step of Section~\ref{sub: intro sharpP})
compatible with tight lower bounds under $\sharpETH$. For several
problems, that of counting perfect matchings being among them, block
interpolation gave the first tight $2^{\Omega(m)}$ lower bounds under
$\sharpETH$. 

When applying this framework to $\PerfMatch^{0,1}$, we would first
reduce $\sharpSAT$ on $n$-variable $3$-CNFs $\varphi$ to instances
$G=G(\varphi)$ for $\PerfMatch^{-1,0,1}$ with $\mathcal{O}(n)$
edges as in the first step of the $\sharpP$-hardness proof. Then
we apply the block interpolation technique to reduce $G$ to $2^{o(n)}$
unweighted instances $G'$ for $\PerfMatch^{0,1}$ with $\mathcal{O}(n)$
edges. While this sub-exponential number of instances is compatible
with the goal of proving tight lower bounds, it leaves open the natural
question whether the same reduction could be achieved with only polynomially
many oracle calls on graphs with $\mathcal{O}(n)$ edges.

Using Lemma~\ref{main lem: Difference technique}, we obtain a strong
positive answer to this question: Replacing the application of block
interpolation by one of parity separation, we obtain a reduction to
merely \emph{two} instances of $\PerfMatch^{0,1}$. And as a synthesis
of structural and quantitative complexity, we also obtain a tight
lower bound for the equality-testing problem $\PerfMatch_{=}^{0,1}$.
\begin{thm}
\label{main thm: Bounds under ETH}Unless $\sharpETH$ fails, the
problem $\PerfMatch^{0,1}$ cannot be solved in time $2^{o(m)}$ on
simple graphs with $m$ edges. The same applies to $\PerfMatch_{=}^{0,1}$
under the decision version $\ETH$.
\end{thm}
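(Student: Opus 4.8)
The plan is to chain together the \emph{linear-size} version of the first step of the $\sharpP$-hardness reduction (from $\sharpSAT$ to $\PerfMatch^{-1,0,1}$) with parity separation, invoking the counting version of the sparsification lemma so that every instance produced stays linear in the number of variables. First I would recall that, by the sparsification lemma in its counting-preserving form (see \cite{Dell.Husfeldt2014}), $\sharpETH$ implies that no algorithm counts the satisfying assignments of a $3$-CNF formula $\varphi$ with $n$ variables and $m$ clauses in time $2^{o(n+m)}$; in particular one may assume $m=\mathcal{O}(n)$. Likewise, $\ETH$ implies that deciding satisfiability of such formulas is impossible in time $2^{o(n+m)}$.

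Next I would apply the first step of the $\sharpP$-hardness proof, i.e.\ Equation~(\ref{eq: SAT to PerfMatch(1,-1)}): when $\varphi$ has $\mathcal{O}(n)$ clauses it produces in polynomial time a number $t(\varphi)\in\mathbb{N}$ and a $\pm1$-weighted graph $G=G(\varphi)$ on $\mathcal{O}(n)$ vertices and edges with $\sharpSAT(\varphi)=\PerfMatch(G)/4^{t(\varphi)}$. I would then feed $G$ into Lemma~\ref{main lem: Difference technique} to obtain two unweighted graphs $G_{1},G_{2}$, each with $\mathcal{O}(n)$ vertices and edges, such that $\PerfMatch(G)=\PerfMatch(G_{1})-\PerfMatch(G_{2})$. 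A hypothetical $2^{o(m)}$-time algorithm for $\PerfMatch^{0,1}$ would evaluate $\PerfMatch(G_{1})$ and $\PerfMatch(G_{2})$ in time $2^{o(n)}$ each; subtracting and dividing by $4^{t(\varphi)}$ recovers $\sharpSAT(\varphi)$ in time $2^{o(n)}$, contradicting $\sharpETH$. Simplicity of $G_{1},G_{2}$ is enforced by the usual double-subdivision trick, which changes neither the parities of matching counts nor the asymptotics.

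For the equality-testing statement I would start instead from the \emph{modified} first step used in the proof of Theorem~\ref{main thm: CeqP}: from a pair $(\varphi,\varphi')$ it yields a $\pm1$-weighted graph $G$ with $\PerfMatch(G)=0$ iff $\sharpSAT(\varphi)=\sharpSAT(\varphi')$, and of size $\mathcal{O}(n)$ once $\varphi,\varphi'$ are sparsified. Applying Lemma~\ref{main lem: Difference technique} to $G$ gives unweighted $G_{1},G_{2}$ of size $\mathcal{O}(n)$ with $\PerfMatch(G_{1})=\PerfMatch(G_{2})$ iff $\PerfMatch(G)$ vanishes, hence iff $(\varphi,\varphi')$ is a yes-instance of $\sharpSAT_{=}$. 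To connect this to $\ETH$ rather than $\sharpETH$, I would take $\varphi'$ to be a fixed unsatisfiable $3$-CNF, so that $\PerfMatch(G_{1})=\PerfMatch(G_{2})$ iff $\varphi$ is unsatisfiable; a $2^{o(m)}$-time algorithm for $\PerfMatch_{=}^{0,1}$ would then decide $3$-SAT in time $2^{o(n)}$ (by complementation), contradicting $\ETH$.

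The main obstacle I anticipate is not a single clever step but the bookkeeping that keeps \emph{every} reduction linear: one must check that the first step of the $\sharpP$-hardness proof (and its modification for $\CeqP$) has only constant blow-up per clause, so that $\mathcal{O}(n)$ clauses yield $\mathcal{O}(n)$ edges; that the gadgets of Lemma~\ref{main lem: Difference technique} also blow up by just a constant factor, which the lemma already guarantees; and that sparsification is applied in its counting-preserving form. Given these ingredients, assembled in the order above, the theorem follows, the point being that parity separation replaces the $2^{o(n)}$ interpolation instances of earlier $\sharpETH$ lower bounds for $\PerfMatch^{0,1}$ by just two.
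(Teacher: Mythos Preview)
Your proposal is correct and follows the same overall strategy as the paper: sparsification, a linear-size reduction to $\PerfMatch^{-1,0,1}$, and parity separation. Two small points of comparison are worth noting.

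First, the reduction you cite as Equation~(\ref{eq: SAT to PerfMatch(1,-1)}) is Valiant's original one with weights $\{-1,0,1,2,3\}$, not $\pm1$; the paper instead packages the entire first step (via the Holant route and Lemma~\ref{lem: Fractional weights in PerfMatch}) into Lemma~\ref{lem: SAT =00003D difference}, which directly outputs the two unweighted graphs $G_{1},G_{2}$ of size $\mathcal{O}(n+m)$ with $2^{T}\cdot\sharpSAT(\varphi)=\PerfMatch(G_{1})-\PerfMatch(G_{2})$. This is just a matter of which linear-size ``first step'' one quotes.

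Second, for $\PerfMatch_{=}^{0,1}$ under $\ETH$, the paper takes a shorter path than you do. You route through the $\CeqP$ construction with a pair $(\varphi,\varphi')$ and then specialize $\varphi'$ to be unsatisfiable; the paper simply reuses the very same $G_{1},G_{2}$ from Lemma~\ref{lem: SAT =00003D difference} and observes that they are equipollent iff $2^{T}\cdot\sharpSAT(\varphi)=0$, i.e., iff $\varphi$ is unsatisfiable. This avoids the selector-variable machinery and the padding of $\varphi'$ altogether, and is why the paper remarks that the $\ETH$ bound is \emph{easier} than the $\CeqP$-completeness: one may reduce from $\SAT$ rather than $\SAT_{=}$. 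Your detour is correct but unnecessary.
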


\subsection*{Organization of this paper}

The remainder of this paper is structured as follows: In Section~\ref{sec: Preliminaries},
we introduce the Holant framework and matchgates, concepts that are
crucial to our constructions. These are put to use in Section~\ref{sec: Parity Separation},
where we prove Lemma~\ref{main lem: Difference technique}, our main
result. Its applications, as discussed above, are shown in Section~\ref{sec: Applications}.
Omitted proofs can be found in the Appendix.

\section{\label{sec: Preliminaries}Preliminaries}

Graphs in this paper may be edge- or vertex-weighted. Given a graph
$G$ and $v\in V(G)$, denote the edges incident with $v$ by $I(v)$.
If the context of an argument unambiguously determines a graph $G$,
we write $n=|V(G)|$ and $m=|E(G)|$.

We denote the Hamming weight of strings $x\in\{0,1\}^{*}$ by $\hw(x)$.
Given a statement $\varphi$, we let $[\varphi]=1$ if $\varphi$
is true, and $[\varphi]=0$ otherwise. For convenience, we recall
that several reduction notions are distinguished in the study of counting
complexity: The most restrictive notion is that of \emph{parsimonious}
(many-one) reductions, which can be slightly relaxed to \emph{weakly}
parsimonious reductions. The most permissive notion is that of \emph{Turing
}reductions.
\begin{defn}
\label{def: Reductions}Let $\A$ and $\B$ be counting problems.
Let $f:\{0,1\}^{*}\to\{0,1\}^{*}$ and $g:\{0,1\}^{*}\to\mathbb{Q}$
be polynomial-time computable functions. If $\A(x)=g(x)\cdot\B(f(x))$
holds for all $x\in\{0,1\}^{*}$, then we call $(f,g)$ a \emph{weakly
parsimonious (polynomial-time) reduction} from $\A$ to $\B$ and
write $\A\leqPoly\B$. If additionally $g(x)=1$ holds for all $x\in\{0,1\}^{*}$,
then we call $f$ \emph{parsimonious} and write $\A\leqPolyPar\B$.

If $\mathbb{T}$ is a deterministic polynomial-time algorithm that
solves $\A$ with an oracle for $\B$, then we call $\mathbb{T}$
a \emph{Turing reduction} from $\A$ to $\B$ and write $\A\leqPolyT\B$.
\end{defn}

\subsection{Weighted sums of (perfect) matchings}

The quantity $\PerfMatch$ on edge-weighted graphs, as defined in
(\ref{eq: Def PerfMatch}) and \cite{Valiant2008}, will be the central
object of investigation in this paper. For intermediate steps, we
also consider the quantity $\MatchSum$ from \cite{Valiant2008}.
\begin{defn}
For vertex-weighted graphs $G$ with $w:V(G)\to\mathbb{Q}$, let $\M[G]$
denote the set of (not necessarily perfect) matchings in $G$. Recall
that $\PM[G]\subseteq\M[G]$ denotes the perfect matchings in $G$.
For $M\in\M[G]$, let $\usat(M)$ denote the set of unmatched vertices
in $M$. Then we define 
\[
\MatchSum(G)=\sum_{M\in\M[G]}\prod_{v\in\usat(M)}w(v).
\]

\end{defn}
Given $W\subseteq\mathbb{Q}$, we write $\PerfMatch^{W}$ for the
problem of evaluating $\PerfMatch(G)$ on graphs $G$ with weights
$w:E(G)\to W$. Likewise, write $\MatchSum^{W}$ on graphs with weights
$w:V(G)\to W$. Please note that an edge of weight $0$ in $\PerfMatch$
can be treated as if it were not present, whereas weight $0$ at a
vertex $v$ in $\MatchSum$ signifies that $v$ must be matched.

We can easily reduce $\PerfMatch^{W}$ for finite $W\subseteq\mathbb{Q}$
to $\PerfMatch^{-1,0,1}$:
\begin{lem}[folklore]
\label{lem: Fractional weights in PerfMatch} Let $G$ be edge-weighted
by $w:E(G)\to\mathbb{Q}$. Let $q\in\mathbb{N}$ denote the lcd of
the weights in $G$, and let $T=q\cdot\max_{e\in E(G)}w(e)$. Then
we can compute a number $B\in\mathbb{N}$ and an edge-weighted graph
$G'$ on $\mathcal{O}(n+Tm)$ vertices and edges, all of weight $\pm1$,
such that $\PerfMatch(G)=q^{-B}\cdot\PerfMatch(G')$.\end{lem}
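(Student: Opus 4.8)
The plan is to clear denominators first and then simulate the resulting positive integer weights by parallel edges, handling the sign separately. First I would multiply every edge-weight by the least common denominator $q$. Since every perfect matching $M$ of $G$ uses exactly $n/2$ edges, scaling each weight by $q$ multiplies each monomial $\prod_{e\in M}w(e)$ by exactly $q^{n/2}$; hence, setting $B:=n/2$, the graph $G_q$ with weights $w_q(e)=q\cdot w(e)$ satisfies $\PerfMatch(G)=q^{-B}\cdot\PerfMatch(G_q)$, and now all weights of $G_q$ are integers in the range $[-T,T]$ where $T=q\cdot\max_e w(e)$. (If $G$ has no perfect matchings at all this step is vacuous; and if $G$ is not $k$-regular in the sense of all perfect matchings having the same size, that is automatic because perfect matchings always have size exactly $n/2$, so there is no subtlety here.)

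Next I would replace each integer weight by a small gadget built from unit-weight edges. An edge $e=uv$ of positive integer weight $a\in\{1,\dots,T\}$ is replaced by $a$ internally disjoint paths of length $2$ from $u$ to $v$: formally, introduce $a$ new degree-$2$ vertices $z_1,\dots,z_a$, each adjacent to both $u$ and $v$, and delete $e$. Any perfect matching either leaves $e$ unused — then all $z_i$ must be matched to each other, impossible if $a$ is odd, so to make this clean one instead uses paths of length $2$ where the middle vertex can only be covered via $u$ or $v$: the standard fix is that each $z_i$ has degree exactly $2$, so in a perfect matching every $z_i$ is matched to $u$ or to $v$; since $u$ and $v$ can each be matched only once, at most one $z_i$ is matched to $u$ and at most one to $v$, which forces either "none used" (all $z_i$ unmatched — impossible) ... so instead I would use the cleaner textbook gadget: subdivide $e$ once into $u\!-\!x\!-\!v$, then put $a$ parallel edges between $u$ and $x$ and a single edge between $x$ and $v$, subdividing parallel edges once more to keep the graph simple. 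This contributes a factor of exactly $a$ to every perfect matching that used $e$ and $1$ to those that did not, and costs $\mathcal{O}(a)$ new vertices and edges. Summing over all $m$ edges gives $\mathcal{O}(n+Tm)$ vertices and edges. After this, the only remaining non-$\{0,1\}$ weight is $-1$, which the statement explicitly allows ($\pm1$), so I would replace each edge of weight $-a$ (for $a\ge 1$) by the same positive gadget for $a$ together with one extra unit subdivision carrying weight $-1$, i.e. leave a single weight-$(-1)$ edge in the gadget. Weight-$0$ edges are simply deleted.

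The main obstacle, such as it is, is purely bookkeeping: one must make sure the gadgets are vertex-disjoint from each other and from $G$, that after subdivisions the graph is simple, and that the claimed size bound $\mathcal{O}(n+Tm)$ is met — each edge spawns $\mathcal{O}(T)$ new vertices and edges, and there are $m$ edges plus the original $n$ vertices. I would also note explicitly that the construction is polynomial-time: $q$ and hence $T$ are polynomial in the input size because the rational weights are given in binary with polynomially many bits, so $T$ is bounded by $2^{\mathrm{poly}}$ in value but the \emph{number} of gadget vertices is $\mathcal{O}(Tm)$, which is the quantity the lemma measures; this matches the statement as written. Finally, chaining the two reductions, $\PerfMatch(G)=q^{-B}\PerfMatch(G_q)=q^{-B}\PerfMatch(G')$ with $G'$ the unweighted-up-to-sign graph, which is exactly the claimed identity.
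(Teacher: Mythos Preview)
Your overall plan matches the paper's proof: multiply by $q$ to clear denominators (with $B=n/2$), peel off the sign so the only negative weight left is $-1$, and simulate each remaining positive integer weight by parallel edges. The first and third of these steps are fine.

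The genuine gap is in your positive-integer-weight gadget. Your first attempt (paths of length $2$) you correctly abandon. But your second attempt --- subdivide $e$ once into $u\text{--}x\text{--}v$, then place $a$ parallel (once-subdivided) edges between $u$ and $x$ --- is also broken. Count the new internal vertices: you add $x$ and $y_1,\dots,y_a$, so $a+1$ vertices. If $e$ is \emph{not} used (so $u,v$ are matched outside), the gadget must perfectly match these $a+1$ vertices among themselves; but each $y_i$ has neighbours only $u$ and $x$, so after you match $x$ to some $y_j$ the remaining $y_i$'s are stranded. If $e$ \emph{is} used, then $v$ must go to $x$ and $u$ to some $y_j$, and again the other $y_i$'s are stranded. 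In either parity of $a$, one of the two cases yields $0$ rather than the claimed factor $a$ or $1$.

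The fix, and what the paper actually does, is to replace the weight-$a$ edge $uv$ by $a$ parallel edges and subdivide each one \emph{twice}, i.e.\ by $a$ internally disjoint length-$3$ paths $u\text{--}y_i\text{--}z_i\text{--}v$. Then when $e$ is unused every path contributes the forced edge $y_iz_i$ (one extension), and when $e$ is used you choose $j$ with $uy_j,z_jv$ in the matching and $y_iz_i$ for all $i\neq j$ (exactly $a$ extensions). This gives the factors $1$ and $a$ you asserted, and keeps the size $\mathcal{O}(n+Tm)$. With that correction your argument is the paper's.
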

\begin{proof}
Define a graph $G_{1}$ from $G$ by declaring $w_{1}(e)=q\cdot w(e)$
for $e\in E(G)$. Then
\[
\PerfMatch(G)\ =\ \sum_{M\in\PM[G]}\prod_{e\in M}w(e)\ =\ \sum_{M\in\PM[G_{1}]}\prod_{e\in M}\frac{w_{1}(e)}{q}\ =\ q^{-n/2}\cdot\PerfMatch(G_{1}).
\]
We construct a graph $G_{2}$ from $G_{1}$ in which the only negative
edge-weight appearing is $-1$: If $e=uv$ is an edge with negative
weight $w(e)\in\mathbb{Z}$, we can subdivide $e$ twice to obtain
subdivision vertices $s_{1}$ and $s_{2}$. Then assign weight $|w(e)|$
to the edge $us_{1}$, assign weight $-1$ to the edge $s_{2}v$,
and weight $1$ to the edge $s_{1}s_{2}$.

Finally, we obtain $G'$ from $G_{2}$ by simulating each edge-weight
$w>0$ of  $G_{2}$ by $w$ parallel edges, subdivided twice to obtain
a simple graph, as previously mentioned in the introduction.
\end{proof}

\subsection{Holant problems}

We give an introduction to the \emph{Holant framework}, summarizing
ideas from \cite{Valiant2008,Cai.Lu2007,Cai.Lu2008}. A more detailed
introduction to the notation used in this subsection and the following
one can be found in \cite{Curticapean.PhD}.\global\long\def\assignment{x\in\{0,1\}^{E(\Omega)}}

\begin{defn}[adapted from~\cite{Valiant2008}]
\label{def: holant} A \emph{signature graph} is an edge-weighted
graph $\Omega$, which may feature parallel edges, with a \emph{vertex
function} $f_{v}:\{0,1\}^{I(v)}\to\mathbb{Q}$ at each $v\in V(\Omega)$.

The \emph{Holant} of $\Omega$ is a particular sum over edge assignments
$\assignment$. We sometimes identify $x$ with $x^{-1}(1)$. Given
$S\subseteq E(\Omega)$, we write $x|_{S}$ for the restriction of
$x$ to $S$, which is the unique assignment in $\{0,1\}^{S}$ that
agrees with $x$ on $S$. Then we define
\begin{equation}
\Holant(\Omega):=\sum_{\assignment}\left(\prod_{e\in x}w(e)\right)\left(\prod_{v\in V(\Omega)}f_{v}(x|_{I(v)})\right).\label{eq: Holant}
\end{equation}

\end{defn}
As a first example, we can reformulate $\PerfMatch(G)$ easily as
the Holant problem of a signature graph $\Omega=\Omega(G)$ by declaring
$f_{v}:\{0,1\}^{I(v)}\to\{0,1\}$ for $v\in V(G)$ to be the vertex
function that maps $x\in\{0,1\}^{*}$ to $1$ iff $\hw(x)=1$ and
to $0$ else.

When considering signature graphs $\Omega$ in the following, we will
always assume that $I(v)$ for each $v\in V(\Omega)$ is ordered in
a fixed (usually implicit) way. This way, if $v$ is a vertex of degree
$d\in\mathbb{N}$, we can view $f_{v}$ as a function $f_{v}:\{0,1\}^{d}\to\mathbb{Q}$,
and we call this representation a \emph{signature}.
\begin{example}
\label{exa: signatures}We consider signatures of arity $k\in\mathbb{N}$
on inputs $x\in\{0,1\}^{[k]}$ with $x=(x_{1},\ldots,x_{k})$. 
\begin{eqnarray*}
\EQ & : & x\mapsto[x_{1}=\ldots=x_{k}]\\
\mathtt{HW_{=1}} & : & x\mapsto[\hw(x)=1]\\
\mathtt{HW_{\leq1}} & : & x\mapsto[\hw(x)\leq1]\\
\mathtt{ODD} & : & x\mapsto x_{1}\oplus\ldots\oplus x_{k}\\
\mathtt{EVEN} & : & x\mapsto1\oplus x_{1}\oplus\ldots\oplus x_{k}.
\end{eqnarray*}
We may write, say, $\EQ_{4}$ to denote the arity-4 signature $\EQ$.
Note that these signatures are symmetric, as they depend only upon
the Hamming weight on the input.
\end{example}
Similarly as for $\PerfMatch$, we can also express $\MatchSum$ as
a Holant problem.
\begin{lem}
\label{lem: Holant MatchSum}Let $G$ be a graph with $w:V(G)\to\mathbb{Q}$.
Then $\MatchSum(G)=\Holant(\Omega)$ with the signature graph $\Omega$
derived from $G$ by placing $\mathtt{VTX}_{w(v)}$ at $v\in V(G)$.
Here, $\mathtt{VTX}_{w}$ for $w\in\mathbb{\mathbb{Q}}$ is 
\[
\mathtt{VTX}_{w}:\quad x\mapsto\begin{cases}
w & \mbox{if }\hw(x)=0,\\
1 & \mbox{if }\hw(x)=1,\\
0 & \mbox{otherwise}.
\end{cases}
\]
\end{lem}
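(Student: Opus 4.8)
The plan is to verify the identity $\MatchSum(G)=\Holant(\Omega)$ by expanding both sides combinatorially and exhibiting a weight-preserving bijection between the terms. On the Holant side, recall from Definition~\ref{def: holant} that $\Holant(\Omega)=\sum_{x}\bigl(\prod_{e\in x}w(e)\bigr)\bigl(\prod_{v\in V(\Omega)}f_{v}(x|_{I(v)})\bigr)$, where the sum ranges over all edge assignments $x\in\{0,1\}^{E(\Omega)}$. Since $\Omega$ and $G$ share the same underlying graph and $\Omega$ is edge-unweighted (so $\prod_{e\in x}w(e)=1$), and since each vertex carries the signature $\mathtt{VTX}_{w(v)}$, the contribution of an assignment $x$ is zero unless $\hw(x|_{I(v)})\leq 1$ for every vertex $v$, i.e.\ unless the edge set $x^{-1}(1)$ forms a matching $M$ of $G$.

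First I would observe that assignments $x$ with nonzero contribution are in one-to-one correspondence with matchings $M\in\M[G]$, via $M=x^{-1}(1)$. For such an $x$, and for each vertex $v$, the local factor $f_v(x|_{I(v)})=\mathtt{VTX}_{w(v)}(x|_{I(v)})$ equals $1$ if $v$ is matched by $M$ (Hamming weight exactly $1$) and equals $w(v)$ if $v$ is unmatched (Hamming weight $0$). Hence $\prod_{v\in V(\Omega)}f_v(x|_{I(v)})=\prod_{v\in\usat(M)}w(v)$. Summing over all matchings then yields exactly $\sum_{M\in\M[G]}\prod_{v\in\usat(M)}w(v)$, which is $\MatchSum(G)$ by definition.

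The only technical care needed is the treatment of isolated vertices and the case $\hw(x|_{I(v)})\geq 2$: in the former, $I(v)=\emptyset$, the empty string has Hamming weight $0$, and $\mathtt{VTX}_{w(v)}$ correctly outputs $w(v)$, consistent with $v\in\usat(M)$; in the latter, $\mathtt{VTX}_{w(v)}$ outputs $0$, which is precisely why non-matchings drop out of the sum. I would also note that parallel edges in $G$ (permitted in signature graphs) cause no issue, since a matching simply may use at most one edge at each vertex regardless of multiplicity. There is no serious obstacle here: the statement is essentially a definitional unwinding, and the ``hard part,'' if any, is merely bookkeeping to confirm that the support of $\mathtt{VTX}_w$ exactly captures the matching constraint while the off-support value $w$ recovers the unmatched-vertex weights. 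This mirrors the analogous reformulation of $\PerfMatch(G)$ as a Holant problem given right after Definition~\ref{def: holant}, with $\mathtt{HW_{=1}}$ replaced by $\mathtt{VTX}_{w(v)}$ to allow (rather than forbid) unmatched vertices.
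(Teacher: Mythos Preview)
Your proposal is correct and follows essentially the same approach as the paper: identify the nonzero terms of $\Holant(\Omega)$ with matchings of $G$, observe that each vertex contributes $1$ if matched and $w(v)$ if unmatched, and sum. Your version is slightly more explicit about edge weights, isolated vertices, and parallel edges, but the argument is the same definitional unwinding.
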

\begin{proof}
In every satisfying assignment $\assignment$, each vertex $v\in V(\Omega)$
is incident with at most one active edge, so $x$ is a (not necessarily
perfect) matching, and $\val_{\Omega}(x)$ is the product of the following
factors: 
\begin{itemize}
\item If $v$ is incident with exactly \emph{one} active edge, then $v$
contributes $1$ to $\val_{\Omega}(x)$. 
\item If $v$ is incident with \emph{no} active edges, so $v\in\usat(G,x)$,
then $v$ contributes $w(v)$.
\end{itemize}
Hence, it holds that $\val_{\Omega}(x)=\prod_{v\in\usat(G,x)}w(v)$.
Summing over all matchings, we obtain $\Holant(\Omega)=\MatchSum(G)$
by identifying terms.
\end{proof}
We can easily reduce edge-weighted Holant problems to unweighted versions
as follows.
\begin{lem}
\label{lem: Removing Holant weights}Let $\Omega'$ be defined as
follows from $\Omega$: Subdivide each $e\in E(\Omega)$, assign weight
$1$ to the obtained subdivision edges, and equip the obtained subdivision
vertices with the signature $\mathtt{EDGE}_{w(e)}$, where 
\[
\mathtt{EDGE}_{w}:\quad x\mapsto\begin{cases}
w & \mbox{if }x=11,\\
0 & \mbox{if }x\in\{01,10\},\\
1 & \mbox{if }x=00.
\end{cases}
\]
Then $\Omega'$ features only the edge-weight $1$, and we have $\Holant(\Omega)=\Holant(\Omega')$.\end{lem}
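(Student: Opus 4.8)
The plan is to verify the two claimed properties of $\Omega'$ directly from the definitions, since the construction is entirely local: each edge of $\Omega$ is replaced by a path of two unit-weight edges through a fresh degree-$2$ vertex carrying the signature $\mathtt{EDGE}_{w(e)}$. First I would observe that $\Omega'$ has all edge-weights equal to $1$ by construction, so the only content is the identity $\Holant(\Omega)=\Holant(\Omega')$.

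To prove that identity, I would set up a bijection between the relevant edge assignments. Write $\assignment$ for an assignment to the original edges. For each $e=uv\in E(\Omega)$, with subdivision vertex $s_e$ and subdivision edges $a_e$ (incident to $u$) and $b_e$ (incident to $v$), define the corresponding assignment $x'\in\{0,1\}^{E(\Omega')}$ by setting $x'(a_e)=x'(b_e)=x(e)$. The key local fact is that the signature $\mathtt{EDGE}_w$ forces $x'(a_e)=x'(b_e)$ in any assignment contributing a nonzero factor: if the two subdivision edges receive different values, $\mathtt{EDGE}_{w(e)}$ evaluates to $0$ and kills the term. Hence every nonzero term of $\Holant(\Omega')$ arises from an assignment of the above form, and the map $x\mapsto x'$ is a bijection from $\{0,1\}^{E(\Omega)}$ onto the set of assignments in $\{0,1\}^{E(\Omega')}$ that are ``consistent'' across every subdivision vertex.

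It then remains to check that corresponding terms carry equal weight. Fix $x$ and its image $x'$. In $\Holant(\Omega')$ the edge-weight product $\prod_{e'\in x'}w(e')$ equals $1$ since all weights are $1$. The vertex-function product splits into two parts: the original vertices $v\in V(\Omega)\subseteq V(\Omega')$, whose incident subdivision edges are in bijection with $I(v)$ and receive exactly the values $x$ assigned to $I(v)$, so $f_v(x'|_{I(v)\text{-part}})=f_v(x|_{I(v)})$; and the subdivision vertices $s_e$, each contributing $\mathtt{EDGE}_{w(e)}(x(e),x(e))$, which equals $w(e)$ if $x(e)=1$ and $1$ if $x(e)=0$ — that is, exactly $\prod_{e\in x}w(e)$ after taking the product over all $e$. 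Multiplying everything out recovers precisely the term $\bigl(\prod_{e\in x}w(e)\bigr)\bigl(\prod_{v\in V(\Omega)}f_v(x|_{I(v)})\bigr)$ from the definition of $\Holant(\Omega)$. Summing over all $x$ gives $\Holant(\Omega)=\Holant(\Omega')$.

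There is no real obstacle here; the only thing requiring care is bookkeeping the orientation/ordering of the subdivision edges at each original vertex so that $f_v$ is applied to the right tuple, and making sure the weight $w(e)$ is transferred exactly once (it appears in $\mathtt{EDGE}_{w(e)}$ and nowhere else). The argument is the standard ``Holant gadget substitution'' check: a degree-$2$ gadget realizing the binary signature $x\mapsto w^{[x=11]}$ (equivalently $\mathtt{EDGE}_w$) can be inserted in place of a weighted edge without changing the Holant, and the proof is simply the term-by-term matching described above.
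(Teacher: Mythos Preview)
Your proposal is correct and follows essentially the same approach as the paper: both set up the bijection $x\mapsto x'$ that copies $x(e)$ onto the two subdivision edges of $e$, observe that $\mathtt{EDGE}_{w(e)}$ kills any assignment violating this consistency so that the bijection hits all nonzero terms of $\Holant(\Omega')$, and then verify that corresponding terms have equal value. Your write-up is in fact more explicit than the paper's, which compresses the weight check into the single line $w_{\Omega}(x)\cdot\val_{\Omega}(x)=w_{\Omega'}(x')\cdot\val_{\Omega'}(x')$.
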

\begin{proof}
The satisfying assignments $\assignment$ stand in bijection with
those of $\Omega'$: Every such $x$ can be transformed to a satisfying
assignment $x'\in\{0,1\}^{E(\Omega')}$ by assigning, for each $e\in E(\Omega)$,
the value $x(e)$ to both edges $e_{1},e_{2}$ obtained in $\Omega'$
from subdividing $e$.

Likewise, every such $x'\in\{0,1\}^{E(\Omega')}$ can be ``contracted''
to a unique satisfying assignment $\assignment$, since $x'(e_{1})=x'(e_{2})$
holds for every edge pair $e_{1,}e_{2}$ replacing an original edge
$e\in E(\Omega)$. We observe that $w_{\Omega}(x)\cdot\val_{\Omega}(x)=w_{\Omega'}(x')\cdot\val_{\Omega'}(x')$
holds, which shows the claim. 
\end{proof}
Finally, a signature is called \emph{even }if its support contains
only bitstrings of even Hamming weight. The problem $\sharpSAT$ can
be rephrased as a Holant problem with even signatures:
\begin{lem}
\label{lem: Holant SAT}For $n,m,d\in\mathbb{N}$, let $\varphi$
be a $d$-CNF formula on variables $x_{1},\ldots,x_{n}$ and clauses
$c_{1},\ldots,c_{m}$. We construct a signature graph $\Omega$ as
follows:
\begin{itemize}
\item For each $i\in[n]$, let $r(i)$ denote the number of occurrences
of $x_{i}$ (as a positive or negative literal) in $\varphi$. Create
a \emph{variable vertex} $v_{i}$ in $\Omega$, with signature $\EQ_{2r(i)}$.
\item For each $j\in[m]$, let $x_{i_{1}},\ldots,x_{i_{d}}$ be the variables
that clause $c_{j}$ depends upon. We create a \emph{clause vertex}
$w_{j}$ in $\Omega$, and for $\kappa\in[d]$, we add two parallel
edges between $w_{j}$ and $x_{i_{\kappa}}$ as the $2\kappa-1$-th
and $2\kappa$-th edges in the ordering of $I(w_{j})$. 
\item For each $j\in[m]$, consider clause $c_{j}$ as a Boolean function
on variables $x_{1},\ldots,x_{d}$. Define a function $c_{j}'$ on
variables $x_{1},\ldots,x_{2d}$ that outputs $c_{j}(x_{1},x_{3},\ldots,x_{2d-1})$
if $x_{2i}=x_{2i-1}$ for all $i\in[d]$. On all other inputs, the
value of $c_{j}'$ may be arbitrary. Assign such a signature $c_{j}'$
to the vertex $w_{j}$.
\end{itemize}
Then $\#\SAT(\varphi)=\Holant(\Omega)$ and $\Omega$ has $n+m$ vertices
and $2dm$ edges and only even signatures.\end{lem}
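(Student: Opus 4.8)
The plan is to exhibit a bijection between the edge assignments $\assignment$ that contribute a nonzero term to $\Holant(\Omega)$ and the truth assignments of $\varphi$, and to check that each such term equals $[\alpha \text{ satisfies } \varphi]$ for the associated assignment $\alpha$. Since $\Omega$ carries only the edge-weight $1$, the term of an assignment $x$ in (\ref{eq: Holant}) reduces to $\prod_{v\in V(\Omega)} f_v(x|_{I(v)})$, so only the vertex signatures matter.

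First I would handle the variable vertices. For $v_i$ the factor $\EQ_{2r(i)}(x|_{I(v_i)})$ is nonzero iff all $2r(i)$ edges incident with $v_i$ receive a common value, which I call $\alpha(x_i)\in\{0,1\}$. Hence every nonzero term determines a truth assignment $\alpha\colon\{x_1,\dots,x_n\}\to\{0,1\}$, and conversely (assuming, as we may, that every variable occurs in $\varphi$) each $\alpha$ arises from exactly one such $x$: set every edge incident with $v_i$ to $\alpha(x_i)$, which is well defined because each edge of $\Omega$ has exactly one endpoint among the variable vertices. Next I would analyze a clause vertex $w_j$ for the clause $c_j$ on variables $x_{i_1},\dots,x_{i_d}$. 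By construction the $(2\kappa-1)$-th and $2\kappa$-th edges of $I(w_j)$ both run to $v_{i_\kappa}$, so in a nonzero term they both carry $\alpha(x_{i_\kappa})$; thus $x|_{I(w_j)}$ meets the consistency condition $x_{2\kappa}=x_{2\kappa-1}$, and by the definition of $c_j'$ its value on this input is $c_j(\alpha(x_{i_1}),\dots,\alpha(x_{i_d})) = [\alpha \text{ satisfies } c_j]$. Multiplying over all clauses, the term of $x$ is $\prod_{j\in[m]}[\alpha \text{ satisfies } c_j] = [\alpha \text{ satisfies } \varphi]$, and summing over all $x$ (equivalently over all $\alpha$) gives $\Holant(\Omega)=\#\SAT(\varphi)$.

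It remains to verify evenness and the size bounds. The signature $\EQ_{2r(i)}$ has support $\{0^{2r(i)},1^{2r(i)}\}$, both strings of even Hamming weight. For $c_j'$ we still have freedom on inputs that violate the consistency condition, and I would use it: set $c_j'$ to $0$ on all such inputs. Then the support of $c_j'$ consists only of consistent inputs, each of Hamming weight $2\,\hw(y)$ where $y=(x_1,x_3,\dots,x_{2d-1})$ is the projected string, hence even. Finally every edge of $\Omega$ runs between a clause vertex and a variable vertex, and each clause vertex is incident with exactly $2d$ of them, so $|E(\Omega)|=2dm$ while $|V(\Omega)|=n+m$.

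I do not expect a genuine obstacle here: the argument is essentially an unfolding of the definitions of $\EQ$, of $c_j'$, and of $\Holant$. The one point that requires care is the last step, namely exploiting the ``don't care'' outputs of $c_j'$ to force every signature to be even without changing the value of the Holant; everything else is routine bookkeeping.
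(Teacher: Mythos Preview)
Your argument is correct and follows the same route as the paper's proof: use the $\EQ$ signatures to set up a bijection between edge assignments with nonzero contribution and truth assignments, then read off the clause values from the $c_j'$ signatures. Your treatment is in fact more careful than the paper's on two points: you make explicit the mild assumption that every variable occurs, and you actually specify how to set the ``don't care'' values of $c_j'$ (namely to $0$) so that its support has even Hamming weight, whereas the paper only remarks that the edge-doubling makes this possible.
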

\begin{proof}
By the $\EQ$ signatures at variable vertices, every satisfying assignment
$\assignment$ corresponds to a unique binary assignment $x':\{x_{1},\ldots,x_{n}\}\to\{0,1\}$
to the variables of $\varphi$. Furthermore, for all $j\in[m]$, the
signature $c_{j}$ at the clause vertex $w_{j}$ ensures that $x'$
satisfies clause $c_{j}$, so altogether $x'$ satisfies $\varphi$.
Likewise, every satisfying assignment to $\varphi$ induces such a
satisfying assignment $\assignment$, thus proving the lemma. Note
that edges between variable and clause vertices come in pairs to ensure
that clauses vertices feature even signatures.
\end{proof}

\subsection{Gates and matchgates}

Given a signature graph $\Omega$, we can sometimes simulate vertex
functions by gadgets or \emph{gates}, which are signature graphs with
so-called \emph{dangling edges} that feature only one endpoint. These
notions are borrowed from the $\mathcal{F}$-gates in \cite{Cai.Lu2008}.
Matchgates were first considered in \cite{Valiant2008}.
\begin{defn}
\label{def: module}For disjoint sets $A$ and $B$, and for assignments
$x\in\{0,1\}^{A}$ and $y\in\{0,1\}^{B}$, we write $xy\in\{0,1\}^{A\cup B}$
for the assignment that agrees with $x$ on $A$, and with $y$ on
$B$. We also say that the assignment $xy$ \emph{extends} $x$.

A \emph{gate} is a signature graph $\Gamma$ containing a set $D\subseteq E(\Gamma)$
of dangling edges, all having edge-weight $1$. The \emph{signature
realized by $\Gamma$ }is the function $\Sig(\Gamma):\{0,1\}^{D}\to\mathbb{Q}$
that maps $x$ to 
\begin{equation}
\Sig(\Gamma,x)=\sum_{y\in\{0,1\}^{E(\Gamma)\setminus D}}\left(\prod_{e\in xy}w(e)\right)\left(\prod_{v\in V(\Gamma)}f_{v}(xy|_{I(v)})\right).\label{eq: module signature}
\end{equation}
A gate $\Gamma$ is a matchgate if it features only the signature
$\sigHW{=1}$.
\end{defn}
In the following, we consider the dangling edges $D$ of gates $\Gamma$
to be labelled as $1,\ldots,|D|$. This way, we can view $\Sig(\Gamma)$
as a function of type $\{0,1\}^{|D|}\to\mathbb{Q}$ instead of $\{0,1\}^{D}\to\mathbb{Q}$.
We will use gates to realize required signatures as ``gadgets''
consisting of other (usually simpler) signatures. Consider the following
example, which appeared in \cite{Valiant2008}.
\begin{example}
It can be verified that $\mathtt{EVEN}_{3}$ and $\mathtt{ODD}_{3}$
are realized by the matchgates $\Gamma_{0}$ and $\Gamma_{1}$ below,
where all vertices are assigned $\sigHW{=1}$.

\begin{center}
\includegraphics[width=0.4\textwidth]{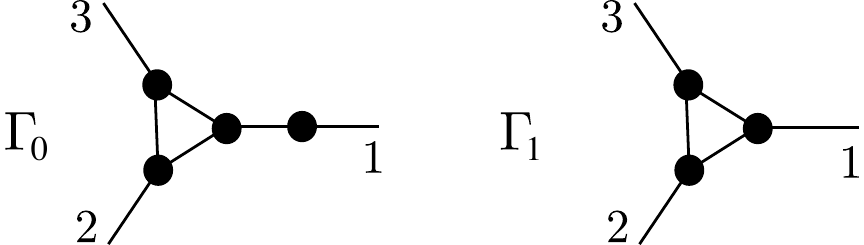}
\par\end{center}

\end{example}
Using this, we can realize the signatures $\mathtt{ODD}_{k}$ and
$\mathtt{EVEN}_{k}$ for any arity $k\geq3$ as matchgates, noted
in a similar way in \cite[Theorem 3.3]{Valiant2008}. This will be
required in Section~\ref{sec: Parity Separation}.
\begin{example}
\label{exa: module-odd}For all $k\geq3$, there exists a gate $\Gamma_{\mathtt{EVEN}}$
with $\Sig(\Gamma_{\mathtt{EVEN}})=\mathtt{EVEN}_{k}$. It consists
of vertices $v_{1},\ldots,v_{k-2}$ equipped with $\mathtt{EVEN}_{3}$,
edges $e_{1},\ldots,e_{k-3}$, and dangling edges $[k]$.

\begin{center}
\includegraphics[width=0.7\textwidth]{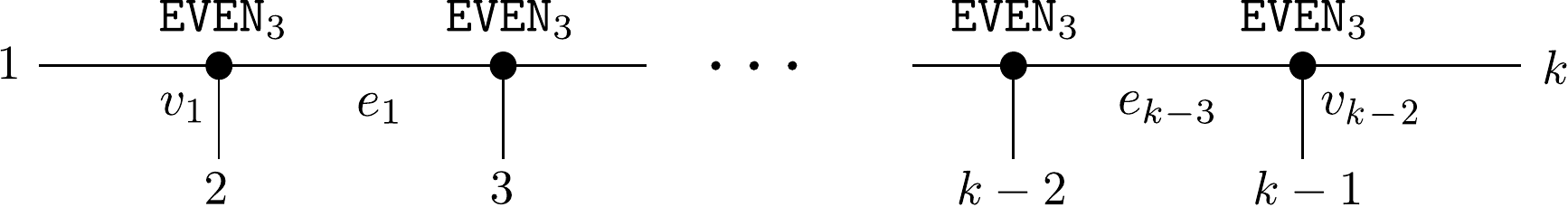}
\par\end{center}

We can likewise realize $\sigOdd_{k}$ by a gate $\Gamma_{\mathtt{ODD}}$
as above, but with $\sigOdd_{3}$ rather than $\mathtt{EVEN}_{3}$
at $v_{k-2}$.\end{example}
\begin{proof}
Let $x$ be a satisfying assignment to $\Gamma_{\mathtt{EVEN}}$.
By $\mathtt{EVEN}_{3}$ at $v_{1}$, we have $x(e_{1})=x(1)\oplus x(2)$,
where $\oplus$ denotes addition in $\mathbb{Z}/2\mathbb{Z}$. Likewise,
we have $x(e_{2})=x(e_{1})\oplus x(3)$, so we obtain inductively
that 
\begin{equation}
x(e_{k-3})=\bigoplus_{t=1}^{k-2}x(t).\label{eq: oddcalc}
\end{equation}
Then $\mathtt{EVEN}_{3}$ at $v_{k-2}$ implies that 
\begin{eqnarray*}
x(e_{k-3})\oplus x(k-1)\oplus x(k) & \underset{\eqref{eq: oddcalc}}{=} & \left(\bigoplus_{t=1}^{k-2}x(t)\right)\oplus x(k-1)\oplus x(k)\ =\ \bigoplus_{t=1}^{k}x(t)=0.
\end{eqnarray*}
The same argument applies for $\Gamma_{\mathtt{ODD}}$.
\end{proof}
In the following, we formalize the operation of \emph{inserting} a
gate $\Gamma$ into a signature graph so as to simulate a desired
signature.
\begin{lem}
\label{lem: module-contract}Let $\Omega$ be a signature graph, let
$v\in V(\Omega)$ with $D=I(v)$ and let $\Gamma$ be a gate with
dangling edges $D$. We can \emph{insert $\Gamma$ at $v$} by deleting
$v$ and keeping $D$ as dangling edges, and then placing $\Gamma$
into $\Omega$ and identifying each dangling edge $e\in D$ across
$\Gamma$ and $\Omega$. If $\Omega'$ is derived from $\Omega$ by
inserting a gate $\Gamma$ with $\Sig(\Gamma)=f_{v}$ at $v$, then
$\Holant(\Omega)=\Holant(\Omega')$. 
\end{lem}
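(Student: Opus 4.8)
The plan is to unfold both Holant sums according to Definition~\ref{def: holant} and Definition~\ref{def: module} and then recognize a Fubini-style interchange of summation. First I would fix notation for $\Omega'$: after inserting $\Gamma$ at $v$, its vertex set is $(V(\Omega)\setminus\{v\})\cup V(\Gamma)$ and its edge set is the disjoint union of $E(\Omega)$ and $E(\Gamma)\setminus D$, where the edges of $D=I(v)$ are shared between the two parts. Since the dangling edges of $\Gamma$ all have weight $1$, no weight conflict arises and we may let each edge of $D$ keep its original $\Omega$-weight; and we identify the $i$-th dangling edge of $\Gamma$ with the $i$-th edge of $I(v)$, so the fixed orderings are respected (this matters for non-symmetric signatures). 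The key structural observations are: (i) every vertex $u\in V(\Omega)\setminus\{v\}$ has the same incident-edge list $I(u)$ in $\Omega'$ as in $\Omega$; (ii) every vertex $u\in V(\Gamma)$ has the same incident-edge list in $\Omega'$ as in $\Gamma$, counting dangling edges as incident; and (iii) an assignment $x'\in\{0,1\}^{E(\Omega')}$ corresponds bijectively to a pair $(x,y)$ with $x:=x'|_{E(\Omega)}$ and $y:=x'|_{E(\Gamma)\setminus D}$, and moreover $x'|_{E(\Gamma)}=(x|_{D})\,y$ in the notation of Definition~\ref{def: module}.

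Using this, I would split the product of vertex functions in $\Holant(\Omega')$ into the factor coming from $V(\Omega)\setminus\{v\}$, which depends only on $x$, and the factor coming from $V(\Gamma)$, which depends only on $(x|_{D})\,y$; likewise the edge-weight product splits as $\prod_{e\in x}w_{\Omega}(e)\cdot\prod_{e\in y}w_{\Gamma}(e)$ (the shared edges of $D$ are counted once, and the $D$-edges appearing inside the $\Sig(\Gamma)$-formula contribute weight $1$). Summing first over $y$ for each fixed $x$, the inner sum is exactly $\sum_{y}\bigl(\prod_{e\in y}w_{\Gamma}(e)\bigr)\prod_{u\in V(\Gamma)}f_{u}\bigl((x|_{D})\,y|_{I(u)}\bigr)$, which by~\eqref{eq: module signature} equals $\Sig(\Gamma,x|_{D})=\Sig(\Gamma,x|_{I(v)})$. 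By hypothesis $\Sig(\Gamma)=f_{v}$, so this inner sum is $f_{v}(x|_{I(v)})$. Re-inserting this factor reconstitutes precisely $\bigl(\prod_{e\in x}w_{\Omega}(e)\bigr)\prod_{u\in V(\Omega)}f_{u}(x|_{I(u)})$, and summing over $x\in\{0,1\}^{E(\Omega)}$ yields $\Holant(\Omega)$, as desired.

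The argument carries no mathematical content beyond this interchange of summation; the one place that needs genuine care — and where I expect a careless write-up to slip — is the bookkeeping around the dangling-edge identification: verifying that $E(\Omega)$ and $E(\Gamma)\setminus D$ are genuinely disjoint in $\Omega'$, that deleting $v$ and gluing in $\Gamma$ disturbs no surviving vertex's incidence list or ordering, and that the weight-$1$ convention on dangling edges makes the two edge-weight products combine without double-counting the edges of $D$. Once those points are pinned down, the displayed chain of equalities is routine.
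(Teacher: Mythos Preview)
Your argument is correct; it is precisely the standard Fubini-style unfolding that underlies this lemma, and your bookkeeping around the dangling-edge identification, the preserved incidence orderings, and the weight-$1$ convention is handled carefully. The paper itself states this lemma without proof (it is treated as folklore in the Holant framework, with a pointer to~\cite{Curticapean.PhD} for related material), so there is no alternative approach to compare against---what you wrote is exactly the proof one would supply if asked to fill in the omission.
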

By an argument from the author's PhD thesis \cite{Curticapean.PhD},
also used in \cite{DBLP:conf/soda/CM_Unpublished}, we can realize
every even signature $f$ by some matchgate $\Gamma=\Gamma(f)$. If
the image of $f$ is $W$, then $\Gamma$ contains $W\cup\{\pm1,\nicefrac{1}{2}\}$
as edge-weights. This yields a reduction from Holant problems to $\PerfMatch$
that we use in Sections~\ref{sub: Application sharpP} and \ref{sub: Application CeqP}.
\begin{lem}[\cite{Curticapean.PhD}]
\label{lem: Realize every Holant}Let $\Omega$ be a signature graph
on $n$ vertices and $m$ edges, with even vertex functions $\{f_{v}\}_{v\in V(\Omega)}$
that map into $W\subseteq\mathbb{Q}$. Let $s=\max_{v\in V(\Omega)}|\supp(f_{v})|$.
Then we can construct, in linear time, a graph $G$ on $\mathcal{O}\left(n+sm\right)$
vertices and edges such that $\Holant(\Omega)=\PerfMatch(G)$. The
edge-weights of $G$ are $W\cup\{\pm1,\nicefrac{1}{2}\}$.\end{lem}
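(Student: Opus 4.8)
The plan is to realize every vertex function of $\Omega$ by a matchgate and then glue these matchgates into $\Omega$ via Lemma~\ref{lem: module-contract}. The single ingredient I would establish first is the following building block: \emph{for every even signature $f$ of arity $k$ with $t:=|\supp(f)|$, there is a matchgate $\Gamma(f)$ on $\mathcal{O}(kt)$ vertices and edges, using only edge-weights from $\mathrm{image}(f)\cup\{\pm1,\nicefrac{1}{2}\}$, with $\Sig(\Gamma(f))=f$.} Granting this, I set $\Gamma_v:=\Gamma(f_v)$ for each $v\in V(\Omega)$ and insert $\Gamma_v$ at $v$, one vertex at a time; by Lemma~\ref{lem: module-contract} every insertion preserves the Holant, so the resulting signature graph $\Omega'$ has $\Holant(\Omega')=\Holant(\Omega)$. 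Since each matchgate carries only $\sigHW{=1}$, every vertex of $\Omega'$ carries $\sigHW{=1}$; reading the reformulation of $\PerfMatch$ as a Holant problem given after Definition~\ref{def: holant} in reverse, this means $\Holant(\Omega')=\PerfMatch(G)$, where $G$ is the edge-weighted graph underlying $\Omega'$ (subdividing parallel edges twice with new unit-weight edges if a simple graph is desired, at only constant cost).

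The heart of the argument, and the step I expect to be the main obstacle, is constructing $\Gamma(f)$ for an arbitrary even signature $f$. The natural idea is to write $f=\sum_{a\in\supp(f)}f(a)\cdot e_a$ with point signatures $e_a\colon x\mapsto[x=a]$; each scaled point signature $f(a)\cdot e_a$ is realized by a tiny matchgate (for even-weight $a$ one simply forces the dangling edges in $a$ to be matched by themselves and the remaining dangling edges to be matched internally, and parks the scalar $f(a)$ on one internal edge); and one then combines the $t$ gadgets into a single matchgate whose external $k$-ary signature equals $f$. This combination is delicate: sums of matchgate signatures need not be matchgate signatures, so the combining gadget cannot simply ``add'' the branches --- instead it routes the $k$ dangling edges into exactly one branch at a time and forces the inactive branches into a unique benign configuration, and this is where the weight $\nicefrac{1}{2}$ and the sign $-1$ are needed to correct for over- and under-counting so that the total matches $f$ on every input. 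Carrying this out uniformly for all even $f$ while keeping the size $\mathcal{O}(kt)$ and the weight set as claimed is precisely the construction of \cite{Curticapean.PhD}, which I would reproduce.

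Finally, two bookkeeping points. If $\Omega$ itself has weighted edges, I first apply Lemma~\ref{lem: Removing Holant weights} to move each edge-weight $w$ onto a subdivision vertex carrying $\mathtt{EDGE}_w$, which is even since its support is $\{00,11\}$; equivalently one may assume all edges of $\Omega$ have weight $1$, as holds for the signature graphs produced by Lemma~\ref{lem: Holant SAT}. The size bound then follows by summing: each $\Gamma_v$ contributes $\mathcal{O}(1+\deg(v)\cdot|\supp(f_v)|)\subseteq\mathcal{O}(1+s\cdot\deg(v))$ vertices and edges, and $\sum_{v\in V(\Omega)}\deg(v)=2m$, so $G$ has $\mathcal{O}(n+sm)$ vertices and edges; its edge-weights are $\bigcup_{v}\mathrm{image}(f_v)\cup\{\pm1,\nicefrac{1}{2}\}=W\cup\{\pm1,\nicefrac{1}{2}\}$; and every step --- constructing each $\Gamma(f_v)$, the insertions, the optional subdivisions --- is performed in linear time.
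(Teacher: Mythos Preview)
The paper does not actually prove this lemma: it is stated with a citation to \cite{Curticapean.PhD} and no proof follows in the text. The only indication the paper gives is the sentence immediately preceding the lemma, namely that every even signature $f$ can be realized by a matchgate $\Gamma(f)$ whose edge-weights lie in $\mathrm{image}(f)\cup\{\pm1,\nicefrac{1}{2}\}$. Your outline is precisely this: construct $\Gamma(f_v)$ for each vertex, insert it via Lemma~\ref{lem: module-contract}, and read off $\PerfMatch$ from the resulting all-$\sigHW{=1}$ signature graph. You correctly isolate the uniform construction of $\Gamma(f)$ as the nontrivial step and, like the paper, defer that step to the thesis. Your size and weight bookkeeping is sound. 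In short, your proposal and the paper agree --- both invoke the same external construction rather than supplying it.
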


\section{\label{sec: Parity Separation}The parity separation technique}

We are ready to prove Lemma~\ref{main lem: Difference technique},
our main result. The proof proceeds by establishing, with several
intermediate steps, the reduction chain 
\begin{equation}
\PerfMatch^{-1,0,1}\leq_{p}\MatchSum^{-1,0,1}\leq_{p}^{T}\PerfMatch^{0,1}.\label{eq: Parity Chain}
\end{equation}
For the first reduction in (\ref{eq: Parity Chain}), we apply a gadget
$\Gamma$ realizing the signature $\mathtt{EDGE}_{-1}$ from Lemma~\ref{lem: Removing Holant weights}
to all edges of weight $-1$.
\begin{lem}
\label{lem: Edge-gate}We have $\mathtt{EDGE}_{-1}=\Sig(\Gamma)$,
where $\Gamma$ is the gate in Figure~\ref{fig: Gamma}. In $\Gamma$,
each vertex features the signature $\mathtt{VTX}_{w}$ for the number
$w\in\{-1,0,1\}$ it is annotated with in the figure.
\end{lem}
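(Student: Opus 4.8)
The plan is to read $\Gamma$ as a \emph{local instance of} $\MatchSum$ and then verify the four defining equations of $\mathtt{EDGE}_{-1}$ by a finite case distinction over the boundary assignments $x\in\{00,01,10,11\}$ to the two dangling edges of $\Gamma$. First I would record what $\Sig(\Gamma,x)$ means when every vertex of $\Gamma$ carries a signature $\mathtt{VTX}_w$. Since $\mathtt{VTX}_w$ returns $0$ whenever its input has Hamming weight at least $2$, an extension $y$ of $x$ contributes nothing to the sum in \eqref{eq: module signature} unless the edges active under $xy$ --- counting the dangling edges that $x$ selects --- form a (not necessarily perfect) matching $M$ of the underlying graph of $\Gamma$. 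For such $M$, each vertex covered by $M$ contributes the factor $1$, each uncovered vertex contributes the weight $w$ it is annotated with, and the active internal edges contribute weight $1$; hence $\Sig(\Gamma,x)=\sum_{M}\prod_{v\notin V(M)}w(v)$, the sum ranging over those matchings $M$ of $\Gamma$ whose set of covered port vertices equals the set of ports whose dangling edge is active under $x$.

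Next I would enumerate, for each of the four assignments $x$, the constantly many such matchings $M$ and add up their contributions, checking that the totals are $1$, $0$, $0$, $-1$ in the order $00,01,10,11$ --- exactly the table defining $\mathtt{EDGE}_{-1}$. Concretely, $\Gamma$ is (up to value-preserving reformulations) the short path whose interior vertex carries weight $-1$ and whose two port vertices carry weight $+1$: when $x=11$ both ports are blocked from matching internally, so every internal edge is forced off and the single weight-$(-1)$ interior vertex is left uncovered, pinning the value to $-1$; when $x\in\{01,10\}$ the externally covered port blocks its incident internal edge, and the two remaining configurations --- interior vertex left free versus matched along the internal edge --- contribute $-1$ and $+1$, cancelling to $0$; and when $x=00$ the weighted matching sum equals $1-(\alpha-1)^2$ at port weight $\alpha$, which is $1$ precisely because $\alpha=1$. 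One checks that the annotated weights $\{-1,0,1\}$ in the figure are exactly what makes all four totals come out right.

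Since $\Gamma$ has only a constant number of vertices and edges, this is a bounded computation with no genuine conceptual content, so the only real work --- and the place where a careless argument goes wrong --- is the bookkeeping in the case analysis. Two points need care: that every configuration with some vertex of Hamming weight $\ge 2$ is indeed annihilated by the corresponding $\mathtt{VTX}_w$, so that the sum really ranges over matchings and nothing else; and that the signed contributions of the surviving matchings add up to exactly the four prescribed values, in particular that no stray term survives in the $x=00$ case to spoil the value $1$. Verifying these yields $\Sig(\Gamma)=\mathtt{EDGE}_{-1}$.
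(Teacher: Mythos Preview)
Your proposal is correct and follows essentially the same approach as the paper: a finite case distinction over the four boundary assignments $x\in\{00,01,10,11\}$, after first observing that the $\mathtt{VTX}_w$ signatures force every nonzero term in $\Sig(\Gamma,x)$ to come from a matching, with each unmatched vertex contributing its annotated weight. Your reconstruction of $\Gamma$ as a three-vertex path with port weights $+1$ and interior weight $-1$ matches the paper's gadget exactly---the matching counts you give (one, two, three configurations for $x=11,10,00$ respectively, with the stated signs) coincide with those in the paper's proof.
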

\begin{figure}
\begin{centering}
\includegraphics[width=3cm]{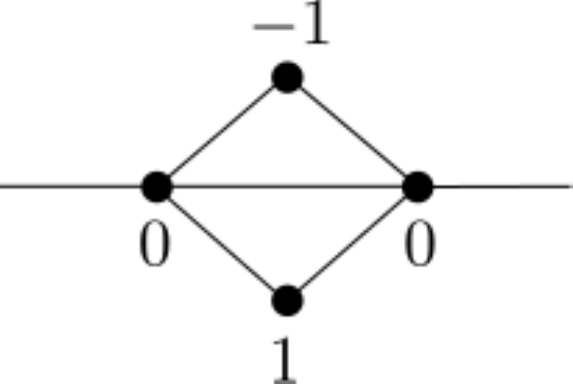}
\par\end{centering}

\caption{\label{fig: Gamma}The gate $\Gamma$.}
\end{figure}

\begin{proof}
Given an assignment $x\in\{0,1\}^{2}$ to the dangling edges of $\Gamma$,
we list the satisfying assignments $xy\in\{0,1\}^{E(\Gamma)}$ that
extend $x$ in Figure~\ref{fig: Gamma states}.
\begin{figure}
\begin{centering}
\includegraphics[width=0.7\textwidth]{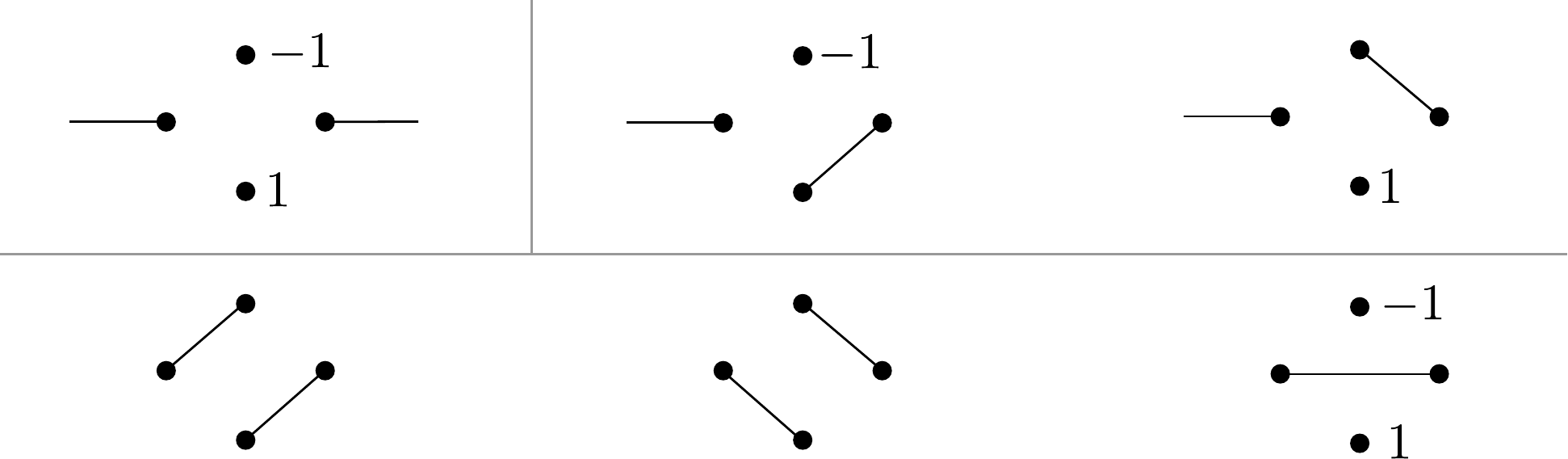}
\par\end{centering}

\caption{\label{fig: Gamma states}The satisfying assignments for $\Gamma$,
grouped by their sets of active dangling edges.}
\end{figure}
 Note that all such assignments are (not necessarily perfect) matchings.
\begin{lyxlist}{00.00.0000}
\item [{$x=11$:}] Only the empty matching can be chosen. It has weight
$-1$, thus $\Sig(\Gamma,11)=-1$.
\item [{$x=10$:}] Two matchings can be chosen, which have opposite weights,
thus $\Sig(\Gamma,10)=0$. By symmetry, the same is true for $\Sig(\Gamma,01)$.
\item [{$x=00$:}] Three matchings can be chosen, of which two have weight
$1$ and one has weight $-1$, thus $\Sig(\Gamma,00)=1$.
\end{lyxlist}
This proves the claim.
\end{proof}
This allows us to transform an instance for $\PerfMatch^{-1,0,1}$
to one for $\MatchSum^{-1,0,1}$. 
\begin{lem}
\label{lem: PerfMatch to MatchSum}Let $G$ be a graph with $n$ vertices
and $m$ edges, all of weight $\pm1$. Then we can compute a graph
$G'$ on $\mathcal{O}(n+m)$ edges, with vertices of weight $\{-1,0,1\}$,
such that $\PerfMatch(G)=\MatchSum(G')$.\end{lem}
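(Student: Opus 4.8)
The plan is to realize $\PerfMatch^{-1,0,1}$ as a Holant problem and then rewrite it as $\MatchSum$ via the edge-gate of Lemma~\ref{lem: Edge-gate}. First I would express $\PerfMatch(G)$ as $\Holant(\Omega)$, where $\Omega$ is obtained from $G$ by placing the signature $\sigHW{=1}$ at every vertex of $G$ and keeping the edge-weights $\pm1$ of $G$; this is exactly the reformulation noted right after Definition~\ref{def: holant}. Next, I would invoke Lemma~\ref{lem: Removing Holant weights} in the special case where only the weight $-1$ occurs: subdivide each edge $e$ of weight $-1$ once, assign weight $1$ to the two new subdivision edges, and equip the new subdivision vertex with the signature $\mathtt{EDGE}_{-1}$. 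Edges of weight $+1$ are left untouched. By Lemma~\ref{lem: Removing Holant weights} the resulting signature graph $\Omega'$ has $\Holant(\Omega')=\Holant(\Omega)=\PerfMatch(G)$, and $\Omega'$ now has only edge-weight $1$; its vertices carry either $\sigHW{=1}$ (the original vertices of $G$) or $\mathtt{EDGE}_{-1}$ (the $O(m)$ subdivision vertices).

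The second step is to eliminate the $\mathtt{EDGE}_{-1}$ signatures by \emph{inserting} a gate. By Lemma~\ref{lem: Edge-gate}, $\mathtt{EDGE}_{-1}=\Sig(\Gamma)$ for the small gate $\Gamma$ of Figure~\ref{fig: Gamma}, whose vertices carry signatures $\mathtt{VTX}_w$ with $w\in\{-1,0,1\}$. Applying Lemma~\ref{lem: module-contract} at each subdivision vertex, I replace every $\mathtt{EDGE}_{-1}$ vertex by a copy of $\Gamma$; this preserves the Holant, so the resulting signature graph $\Omega''$ satisfies $\Holant(\Omega'')=\PerfMatch(G)$. Now $\Omega''$ has edge-weight $1$ throughout, and its vertex functions are $\sigHW{=1}$ at original vertices of $G$ and $\mathtt{VTX}_w$ for $w\in\{-1,0,1\}$ at the $O(m)$ gate vertices. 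The key observation is that $\sigHW{=1}=\mathtt{VTX}_0$, so in fact \emph{every} vertex of $\Omega''$ carries a signature of the form $\mathtt{VTX}_{w(v)}$ with $w(v)\in\{-1,0,1\}$. By Lemma~\ref{lem: Holant MatchSum}, reading $\Omega''$ as coming from a vertex-weighted graph $G'$ via the assignment $v\mapsto w(v)$, we get $\Holant(\Omega'')=\MatchSum(G')$. Chaining the equalities yields $\PerfMatch(G)=\MatchSum(G')$, and $G'$ has vertex-weights in $\{-1,0,1\}$ as required.

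It remains to check the size bound: $\Omega'$ adds one subdivision vertex and doubles $O(m)$ edges, and each gate $\Gamma$ inserted by Lemma~\ref{lem: module-contract} has constant size, so $\Omega''$ — and hence $G'$ — has $\mathcal{O}(n+m)$ vertices and edges, and the whole construction runs in linear time. The main obstacle, and the only genuinely non-routine point, is the bookkeeping that identifies $\sigHW{=1}$ with $\mathtt{VTX}_0$ so that the entire graph $\Omega''$ fits the hypothesis of Lemma~\ref{lem: Holant MatchSum}; one must also verify that the dangling-edge labels match up correctly when inserting $\Gamma$ (the two dangling edges of $\Gamma$ correspond to the two halves of a subdivided $-1$-edge), but this is immediate from the symmetry of $\mathtt{EDGE}_{-1}$ noted in the proof of Lemma~\ref{lem: Edge-gate}. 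Everything else is a direct concatenation of Lemmas~\ref{lem: Removing Holant weights}, \ref{lem: Edge-gate}, \ref{lem: module-contract}, and \ref{lem: Holant MatchSum}.
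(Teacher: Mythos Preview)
Your proposal is correct and follows essentially the same route as the paper's proof: express $\PerfMatch$ as a Holant, push the $-1$ edge-weights into $\mathtt{EDGE}_{-1}$ signatures via Lemma~\ref{lem: Removing Holant weights}, realize those by the gate $\Gamma$ of Lemma~\ref{lem: Edge-gate}, then invoke Lemma~\ref{lem: Holant MatchSum} after noting $\sigHW{=1}=\mathtt{VTX}_0$. The only thing you omit that the paper mentions is the trivial assumption that $|V(G)|$ is even (else $\PerfMatch(G)=0$), and conversely you spell out the $\mathcal{O}(n+m)$ size bound more explicitly than the paper does.
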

\begin{proof}
We assume that $|V(G)|$ is even, as otherwise $\PerfMatch(G)=0$.
First, let $\Omega$ be the signature graph constructed by assigning
$\sigHW{=1}$ to all vertices of $G$, and then applying the signature
$\mathtt{EDGE}_{-1}$ as in Lemma~\ref{lem: Removing Holant weights}.
We obtain $\PerfMatch(G)=\Holant(\Omega)$. 

Then realize each occurrence of $\mathtt{EDGE}_{-1}$ by the gate
$\Gamma$ from Lemma~\ref{lem: Edge-gate}. Note that $\Gamma$ features
no edge-weights, and only the signature $\mathtt{VTX}_{w}$ for $w\in\{-1,0,1\}$.
We obtain a signature graph $\Omega'$ whose signatures are all of
the type $\mathtt{VTX}_{w}$ for $w\in\{-1,0,1\}$, and which satisfies
$\Holant(\Omega)=\Holant(\Omega')$. Note that $\sigHW{=1}=\mathtt{VTX}_{0}$,
so this indeed covers all vertices of $\Omega'$.

By Lemma~\ref{lem: Holant MatchSum}, we may equivalently consider
$\Holant(\Omega')=\MatchSum(G')$, where $G'$ is a vertex-weighted
graph obtained from $\Omega'$ as follows: Keep all vertices and edges
of $\Omega'$ intact, and if $v\in V(\Omega')$ features the signature
$\mathtt{VTX}_{w}$, for $w\in\{-1,0,1\}$, then assign the vertex
weight $w$ to $v$ in $G'$. 
\end{proof}
For the second reduction in (\ref{eq: Parity Chain}), we perform
the actual act of parity separation: We will split the vertex-weighted
graph $G'$ into an even part $G_{0}$ and an odd part $G_{1}$, both
unweighted, such that the \emph{perfect} matchings of the even (resp.
odd) part correspond bijectively to the matchings of $G'$ with an
even (resp. odd) number of unmatched vertices of weight $-1$. Since
$(-1)^{\mathit{even}}=1$ and $(-1)^{\mathit{odd}}=-1$, this clearly
implies that $\MatchSum(G)$ is the difference of $\PerfMatch(G_{0})$
and $\PerfMatch(G_{1})$.

To proceed, we first use the signatures $\sigEven$ and $\sigOdd$
from Example~\ref{exa: signatures} to obtain an alternative reformulation
of $\MatchSum^{-1,0,1}$ as the difference of two Holants.
\begin{lem}
\label{lem: MatchSum to Holant}Let $G'$ be a graph with vertex-weights
$\{-1,0,1\}$. For $a,b\in\{0,1\}$, let $\Phi_{ab}=\Phi_{ab}(G')$
be the signature graph obtained as follows:
\begin{enumerate}
\item Assign the signature $\sigHW{=1}$ to all vertices of $G'$.
\item For $x\in\{-1,0,1\}$, let $V_{x}\subseteq V(G')$ denote the set
of vertices of weight $x$ in $G'$. For $x\in\{-1,1\}$, add a vertex
$u_{x}$ connected to $V_{x}$. Assign to $u_{-1}$ the signature
$\sigEven$ if $a=0$, and assign $\sigOdd$ if $a=1$. Likewise,
assign to $u_{1}$ the signature $\sigEven$ if $b=0$, and assign
$\sigOdd$ if $b=1$.
\end{enumerate}
Then we have $\MatchSum(G')=\Holant(\Phi_{00})-\Holant(\Phi_{11})$.\end{lem}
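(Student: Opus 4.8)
The plan is to expand both Holants as sums over edge assignments and reorganize the sum according to the number of unmatched $-1$-vertices and unmatched $+1$-vertices in the underlying (not necessarily perfect) matching. First I would record what a satisfying assignment of $\Phi_{ab}$ looks like. The $\sigHW{=1}$ signatures on the copies of $V(G')$ force that, restricted to the edges of $G'$, a satisfying assignment $x$ is a matching $M\in\M[G']$, and each vertex of $G'$ is incident with \emph{exactly one} active edge — either an edge of $G'$, or (if $v\in V_{-1}\cup V_{1}$) possibly the pendant edge to $u_{-1}$ or $u_{1}$. Hence the active pendant edges at $u_{-1}$ are precisely those leading to vertices of $V_{-1}$ that are \emph{unmatched} in $M$, i.e.\ the set $\usat(M)\cap V_{-1}$; similarly for $u_{1}$. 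Note that vertices of weight $0$ have no pendant edge, so they must be matched by $M$ — which matches the convention, recorded just before Lemma \ref{lem: Fractional weights in PerfMatch}, that weight $0$ in $\MatchSum$ forces the vertex to be matched; thus the matchings $M$ that actually contribute are exactly those with $\usat(M)\subseteq V_{-1}\cup V_{1}$.

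Next I would compute the value $\val$ of such an assignment. All edge-weights in $\Phi_{ab}$ are $1$, and every $\sigHW{=1}$ vertex contributes $1$. So the only contribution is from $u_{-1}$ and $u_{1}$. For a matching $M$ with $\usat(M)\subseteq V_{-1}\cup V_1$, set $p=|\usat(M)\cap V_{-1}|$ and $q=|\usat(M)\cap V_{1}|$. Then in $\Phi_{ab}$ the vertex $u_{-1}$ receives an input of Hamming weight $p$ and contributes $[\,p\equiv a \pmod 2\,]$, and $u_1$ contributes $[\,q\equiv b\pmod 2\,]$; for matchings with $\usat(M)\not\subseteq V_{-1}\cup V_1$ the value is $0$. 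Therefore
\[
\Holant(\Phi_{ab})=\sum_{\substack{M\in\M[G']\\ \usat(M)\subseteq V_{-1}\cup V_{1}}}[\,p(M)\equiv a\,][\,q(M)\equiv b\,],
\]
where $p(M),q(M)$ are as above and congruences are mod $2$. Subtracting,
\[
\Holant(\Phi_{00})-\Holant(\Phi_{11})=\sum_{\substack{M\in\M[G']\\ \usat(M)\subseteq V_{-1}\cup V_{1}}}\bigl([p\equiv 0][q\equiv 0]-[p\equiv 1][q\equiv 1]\bigr).
\]

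Finally I would identify the right-hand side with $\MatchSum(G')$. By definition $\MatchSum(G')=\sum_{M}\prod_{v\in\usat(M)}w(v)$, and the product vanishes unless $\usat(M)\subseteq V_{-1}\cup V_{1}$ (an unmatched weight-$0$ vertex kills the term), in which case it equals $(-1)^{p(M)}\cdot 1^{q(M)}=(-1)^{p(M)}$. So it suffices to check the identity of integers
\[
(-1)^{p}=[p\equiv 0][q\equiv 0]-[p\equiv 1][q\equiv 1]
\]
for all $p,q$ with $p,q\in\{0,1\}$ after reduction mod $2$ — but this is false as stated, so here I would be careful: the two summations run over \emph{different} index sets only if some matchings are dropped, which they are not, so instead the correct bookkeeping is that for \emph{fixed} $M$ exactly one of the four parity-pattern indicators $[p\equiv a][q\equiv b]$ is $1$, and it is counted with sign $+1$ in $\Phi_{00}$ when $(p,q)\equiv(0,0)$ and with sign $-1$ in $\Phi_{11}$ when $(p,q)\equiv(1,1)$, and \emph{not at all} when $(p,q)\equiv(0,1)$ or $(1,0)$. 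This is the one subtle point of the proof and the main obstacle: the pair $(\Phi_{00},\Phi_{11})$ alone does \emph{not} reconstruct $\MatchSum(G')$ unless the weight-$+1$ vertices play no role, i.e.\ unless one also observes that, since $1^{q}=1$ regardless of $q$, the correct decomposition actually needs the parity of $p$ only, which forces us to sum $\Phi_{00}+\Phi_{01}$ for the "$+1$" bucket and $\Phi_{10}+\Phi_{11}$ for the "$-1$" bucket; the claimed two-term formula then holds precisely when $V_{1}=\emptyset$, and in general one reduces to that case by absorbing weight-$+1$ vertices (they contribute factor $1$ and so may simply be assigned $\sigHW{=1}$ with no pendant edge, i.e.\ treated as weight-$0$ vertices that are allowed to be unmatched — equivalently, dropped from $V_1$). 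I would carry out exactly this reduction to the case $V_1=\emptyset$ first, and then the displayed identity $(-1)^p=[p\equiv 0]-[p\equiv 1]$ is immediate, completing the proof.
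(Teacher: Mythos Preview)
Your setup is entirely correct: satisfying assignments of $\Phi_{ab}$ correspond bijectively to matchings $M$ of $G'$ with $\usat(M)\subseteq V_{-1}\cup V_{1}$, contributing $[p(M)\equiv a][q(M)\equiv b]$, and you correctly isolate the obstacle, namely that matchings with $(p,q)\equiv(0,1)$ or $(1,0)$ contribute $(-1)^{p}=\pm 1$ to $\MatchSum(G')$ but $0$ to $\Holant(\Phi_{00})-\Holant(\Phi_{11})$. However, your proposed resolution does not work. You may not ``absorb'' weight-$1$ vertices or otherwise alter the construction, since the lemma fixes $\Phi_{ab}$; and the assertion that the two-term formula holds ``precisely when $V_{1}=\emptyset$'' is false (for instance, take $G'$ a single edge with one endpoint of weight $-1$ and the other of weight $1$).

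The missing observation, which the paper uses, is a parity count on $|V(G')|$. The paper assumes $|V(G')|$ is even; this is legitimate because in the chain~(\ref{eq: Parity Chain}) the graph $G'$ is produced by Lemma~\ref{lem: PerfMatch to MatchSum} from a graph $G$ with $|V(G)|$ even, by inserting the four-vertex gate $\Gamma$ at certain edges. Under this assumption, every matching $M$ has $|\usat(M)|=|V(G')|-2|M|$ even, and for the contributing matchings (those with $\usat(M)\subseteq V_{-1}\cup V_{1}$) this says $p(M)+q(M)$ is even, i.e.\ $p\equiv q\pmod 2$. Hence the parity patterns $(0,1)$ and $(1,0)$ simply do not occur, and your displayed identity reduces to $(-1)^{p}=[p\equiv 0]-[p\equiv 1]$ on the nose. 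Without the evenness hypothesis the lemma is indeed false (e.g.\ $G'$ a single vertex of weight $1$), so this is not a cosmetic point but the actual mechanism.
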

\begin{proof}
Assume that $G'$ has an even number of vertices, so every matching
of $G'$ has an even number of unmatched vertices. For matchings $M\in\mathcal{M}[G']$,
let $w(M)=\prod_{v\in\usat(M)}w(v)$. If $w(M)\neq0$, then we have
$\usat(M)\subseteq V_{-1}\cup V_{1}$. 

For every $M\in\M[G']$, either $|\usat(M)\cap V_{-1}|$ and $|\usat(M)\cap V_{1}|$
are both even, or both are odd. In the first case, we can uniquely
extend $M$ to a satisfying assignment for $\Phi_{00}$: For every
unmatched vertex $v\in\usat(M)$ of weight $x\in\{-1,1\}$, include
the edge from $v$ to $u_{x}$. Then the signatures $\sigHW{=1}$
at vertices other than $u_{-1}$ and $u_{1}$ yield $1$, and the
signatures $\sigEven$ at $u_{-1}$ and $u_{1}$ yield $1$ as well.
Conversely, satisfying assignments to $\Phi_{00}$ can be mapped to
unique matchings of $G'$ with an even number of unmatched vertices
of weight $-1$ and $1$ each. The same correspondence can be established
between $\Phi_{11}$ and the matchings of $G'$ with an \emph{odd
}number of unmatched vertices of weight $-1$ and $1$ each.

It is clear that matchings $M$ with oddly many unmatched vertices
of weight $-1$ have $w(M)=-1$, while those with an even number satisfy
$w(M)=1$. This proves the lemma.
\end{proof}
The second reduction in (\ref{eq: Parity Chain}) follows by realizing
the signatures $\sigOdd$ and $\sigEven$ appearing in $\Phi_{00}$
and $\Phi_{11}$ via matchgates that feature neither edge- nor vertex-weights.
Note that the only other appearing signature $\sigHW{=1}$ is trivially
realized by such a matchgate.
\begin{proof}[Proof of Lemma~\ref{main lem: Difference technique}]
Follows from Lemma~\ref{lem: PerfMatch to MatchSum} (to reduce
$\PerfMatch^{-1,0,1}$ to $\MatchSum^{-1,0,1}$) with Lemma~\ref{lem: MatchSum to Holant}
(to reformulate $\MatchSum^{-1,0,1}$ as a Holant problem) and Example~\ref{exa: module-odd}
(to realize the $\sigOdd$ and $\sigEven$ signatures occurring in
the Holant problem by unweighted matchgates).
\end{proof}
We remark that the proof could also be expressed in the framework
of combined signatures introduced in~\cite{Curtican.Xia2015}. A
presentation along these lines can be found in \cite{Curticapean.PhD}.

\section{\label{sec: Applications}Parity separation in action}

In the final section of this paper, we cover the three applications
of parity separation that we discussed in the introduction.

\subsection{\label{sub: Application sharpP}Completeness for $\protect\sharpP$}

We can easily show the $\sharpP$-completeness of $\PerfMatch^{0,1}$
via parity separation. To this end, we first express $\sharpSAT$
as a Holant problem on even signature graphs, as seen in Lemma~\ref{lem: Holant SAT}.
Together with Lemma~\ref{lem: Realize every Holant}, this yields
$\sharpSAT\leqPoly\PerfMatch^{B}$ with $B=\{-1,0,\nicefrac{1}{2},1\}$.
We use Lemma~\ref{lem: Fractional weights in PerfMatch} to remove
the edge-weight $\nicefrac{1}{2}$, and finally remove the weight
$-1$ by parity separation as in Lemma~\ref{main lem: Difference technique}.
This yields the following lemma.
\begin{lem}
\label{lem: SAT =00003D difference}Let $\varphi$ be a $3$-CNF formula
with $n$ variables and $m$ clauses. Then we can compute a number
$T\in\mathbb{N}$ and construct two unweighted graphs $G_{1}$ and
$G_{2}$ on $\mathcal{O}(n+m)$ vertices and edges, all in time $\mathcal{O}(n+m)$,
such that $2^{T}\cdot\sharpSAT(\varphi)=\PerfMatch(G_{1})-\PerfMatch(G_{2}).$
\end{lem}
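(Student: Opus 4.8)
The plan is to assemble the statement by chaining the machinery of Section~\ref{sec: Preliminaries} with parity separation, following exactly the route indicated just before the lemma. First I would apply Lemma~\ref{lem: Holant SAT} with $d=3$ to turn $\varphi$ into a signature graph $\Omega$ on $n+m$ vertices and $6m$ edges with $\Holant(\Omega)=\sharpSAT(\varphi)$ and only even signatures. Two kinds of signatures occur: the equality signatures $\EQ_{2r(i)}$ at variable vertices, with support $\{0^{2r(i)},1^{2r(i)}\}$ of size $2$, and the clause signatures $c_j'$ at clause vertices. For the latter I would fix the don't-care values to $0$; this keeps $c_j'$ even and makes its support exactly the set of ``paired'' tuples $(b_1,b_1,b_2,b_2,b_3,b_3)$ with $c_j(b_1,b_2,b_3)=1$, of which there are at most $7$. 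Hence every signature of $\Omega$ maps into $W=\{0,1\}$ and has support of size $s\le 7$.

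Next I would feed $\Omega$ into Lemma~\ref{lem: Realize every Holant}, obtaining in linear time a graph $G$ on $\mathcal{O}(n+sm)=\mathcal{O}(n+m)$ vertices and edges with $\PerfMatch(G)=\Holant(\Omega)=\sharpSAT(\varphi)$ and edge-weights in $W\cup\{\pm1,\nicefrac{1}{2}\}=\{-1,0,\nicefrac{1}{2},1\}$. The only thing that prevents applying parity separation directly is the weight $\nicefrac{1}{2}$, so I would clear it using Lemma~\ref{lem: Fractional weights in PerfMatch}: here the least common denominator of the weights is $q=2$ and the parameter $T$ of that lemma equals the constant $2$, so we obtain a number $B\in\mathbb{N}$ and a graph $G^{*}$ on $\mathcal{O}(n+2m)=\mathcal{O}(n+m)$ vertices and edges, all of weight $\pm1$, with $\PerfMatch(G)=2^{-B}\cdot\PerfMatch(G^{*})$; equivalently $2^{B}\cdot\sharpSAT(\varphi)=\PerfMatch(G^{*})$.

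Finally I would apply Lemma~\ref{main lem: Difference technique} to the $\pm1$-weighted graph $G^{*}$, producing in time $\mathcal{O}(n+m)$ two unweighted graphs $G_{1},G_{2}$, each on $\mathcal{O}(n+m)$ vertices and edges, with $\PerfMatch(G^{*})=\PerfMatch(G_{1})-\PerfMatch(G_{2})$. Setting $T:=B$ and composing the three equations yields $2^{T}\cdot\sharpSAT(\varphi)=\PerfMatch(G_{1})-\PerfMatch(G_{2})$. Since each step runs in linear time and incurs only a constant-factor blowup, the entire construction takes time $\mathcal{O}(n+m)$ and outputs graphs of size $\mathcal{O}(n+m)$.

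The proof is essentially bookkeeping, so there is no single hard step; the point to be careful about is that \emph{every} reduction in the chain has constant blowup. Concretely this means checking that $s=\mathcal{O}(1)$ when invoking Lemma~\ref{lem: Realize every Holant} (which is where the don't-care fix matters, giving $s\le 7$), and that $q$ and the parameter $T$ in Lemma~\ref{lem: Fractional weights in PerfMatch} are absolute constants (both equal $2$ here, since the only fractional weight is $\nicefrac{1}{2}$ and all weights are bounded). As long as these quantities do not grow with $n$ or $m$, the bounds $\mathcal{O}(n+m)$ on the sizes of $G_{1},G_{2}$ and on the running time propagate through the whole chain.
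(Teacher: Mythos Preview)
Your proposal is correct and follows exactly the route the paper indicates: apply Lemma~\ref{lem: Holant SAT}, then Lemma~\ref{lem: Realize every Holant}, then Lemma~\ref{lem: Fractional weights in PerfMatch}, and finally Lemma~\ref{main lem: Difference technique}. Your additional care in bounding the support size $s\le 7$ (by zeroing the don't-care values of the clause signatures) and in checking that $q$ and the blowup parameter in Lemma~\ref{lem: Fractional weights in PerfMatch} are absolute constants is precisely the bookkeeping needed to certify the $\mathcal{O}(n+m)$ bounds, which the paper's outline leaves implicit.
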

This readily implies Theorem~\ref{thm: sharpP-completeness}, the
desired $\sharpP$-completeness result.

\subsection{\label{sub: Application CeqP}Completeness for $\protect\CeqP$}

For our next application, we apply the parity separation technique
to prove Theorem~\ref{main thm: CeqP}. That is, we prove $\CeqP$-completeness
of the problem $\PerfMatch_{=}^{0,1}$ that asks, given two unweighted
graphs $G_{1}$ and $G_{2}$, whether their numbers of perfect matchings
agree. We call graphs satisfying this property \emph{equipollent graphs}
and will likewise speak of \emph{equipollent formulas }if their numbers
of satisfying assignments agree.
\begin{proof}[Proof of Theorem~\ref{main thm: CeqP}]
The problem $\PerfMatch_{=}^{0,1}$ is clearly contained in $\CeqP$.
For the hardness part, we reduce from the $\CeqP$-complete problem
$\sharpSAT_{=}$ that asks, given 3-CNF formulas $\varphi$ and $\varphi'$,
to determine whether they are equipollent. To this end, we construct
unweighted graphs $G$ and $G'$ that are equipollent if and only
if $\varphi$ and $\varphi'$ are.

Assume that $\varphi$ and $\varphi'$ are defined on the same set
of variables $x_{1},\ldots,x_{n}$ and feature the same number $m$
of clauses. This can be achieved by renaming variables, and by adding
dummy variables and clauses. If, say, $\varphi$ has less variables
than $\varphi'$, then we can add dummy variables to $\varphi'$,
together with clauses that ensure that every dummy variable has the
same assignment as $x_{1}$. We can also duplicate clauses.

Let $C_{1},\ldots,C_{m}$ and $C'_{1},\ldots,C'_{m}$ denote the clauses
in $\varphi$ and $\varphi'$, respectively. We introduce a \emph{selector}
variable $x^{*}$ and define a formula $\psi$ on the variable set
$\mathcal{X}=\{x^{*},x_{1},\ldots,x_{n}\}$, which has clauses $D_{1},\ldots,D_{m}$
and $D'_{1},\ldots,D'_{m}$, where $D_{i}:=(x^{*}\vee C_{i})$ and
$D_{i}':=(\neg x^{*}\vee C'_{i})$ for $i\in[m]$. If $a(x^{*})=0$
holds in an assignment $a\in\{0,1\}^{\mathcal{X}}$, then all clauses
$D'_{1},\ldots,D'_{m}$ are satisfied by $\neg x^{*}$, but in order
for $a$ to satisfy $\psi$, the clauses $D_{1},\ldots,D_{m}$ have
to be satisfied by $x_{1},\ldots,x_{n}$. In other words, if $a$
satisfies $\psi$ and $a(x^{*})=0$, then the restriction of $a$
to $x_{1},\ldots,x_{n}$ satisfies $\varphi$, and if $a$ satisfies
$\psi$ and $a(x^{*})=1$, then the restriction of $a$ to $x_{1},\ldots,x_{n}$
satisfies $\varphi'$. 

Hence, we can define the following quantity 
\[
S:=\sum_{a\in\{0,1\}^{\mathcal{X}}}(-1)^{a(x^{*})}\cdot[\psi\mbox{ satisfied by }a]
\]
and we observe that $S=\sharpSAT(\varphi)-\sharpSAT(\varphi')$. It
is clear that $S=0$ if and only if $\varphi$ and $\varphi'$ are
equipollent. As in Lemma~\ref{lem: Holant SAT}, we then express
$S=\Holant(\Omega)$ for a signature graph $\Omega=\Omega(\psi)$,
with one modification: At the vertex $v^{*}$ corresponding to the
variable $x^{*}$, we replace the signature $\EQ$ by a modified signature
\[
\mathtt{EQ}_{-}:\quad y\mapsto\begin{cases}
-1 & \mbox{if }y=1\ldots1,\\
1 & \mbox{if }y=0\ldots0,\\
0 & \mbox{otherwise}.
\end{cases}
\]

We realize $\Omega$ via Lemma~\ref{lem: Realize every Holant} to
obtain a graph $G$, simulate the edge-weight $\nicefrac{1}{2}$ via
Lemma~\ref{lem: Fractional weights in PerfMatch}, and obtain an
edge-weighted graph $H$ with weights $\pm1$ together with a number
$T\in\mathbb{N}$ such that 
\begin{equation}
S=\Holant(\Omega)=2^{-T}\cdot\PerfMatch(H).\label{Q:Unweighted: eq: Holant =00003D WeightedPM}
\end{equation}
Using Lemma~\ref{main lem: Difference technique}, we then obtain
unweighted graphs $G$ and $G'$ such that 
\begin{equation}
\PerfMatch(H)=\PerfMatch(G)-\PerfMatch(G').\label{Q:Unweighted: eq: PM =00003D PM01}
\end{equation}
It is clear that $G$ and $G'$ are equipollent iff $S=0$, which
in turn holds iff $\varphi$ and $\varphi'$ are equipollent.
\end{proof}

\subsection{\label{sub: Application ETH}Tight lower bounds under $\protect\sharpETH$}

By the exponential-time hypothesis $\sharpETH$, there is no $2^{o(n)}$
time algorithm for counting satisfying assignments to $3$-CNF formulas
$\varphi$ with $n$ variables. Applying the counting version of the
so-called sparsification lemma, shown in \cite{Dell.Husfeldt2014},
we may additionally assume that $\varphi$ features $m=\mathcal{O}(n)$
clauses. Then Lemma~\ref{lem: SAT =00003D difference} clearly implies
the lower bound for $\PerfMatch^{0,1}$ claimed in Theorem~\ref{main thm: Bounds under ETH}.

Concerning $\PerfMatch_{=}^{0,1}$, it is even easier to prove lower
bounds under $\ETH$ than to prove its $\CeqP$-completeness, as we
may (i) reduce from $\SAT$ rather than $\SAT_{=}$, and (ii) use
the more permissive notion of Turing (rather than many-one) reductions:
With Lemma~\ref{lem: SAT =00003D difference}, we can construct unweighted
graphs $G_{1}$ and $G_{2}$ on $\mathcal{O}(m)$ vertices and edges
that are equipollent iff $\varphi$ is \emph{unsatisfiable}, thus
a $2^{\mathcal{O}(m)}$ time algorithm would contradict $\ETH$. This
proves Theorem~\ref{main thm: Bounds under ETH}.

\section{Conclusion and future work}

We have added a new method to the known techniques (modular arithmetic
and polynomial interpolation) for removing the edge-weight $-1$ from
$\PerfMatch^{-1,0,1}$. This method is based on matchgates and the
rather trivial observation that $(-1)^{\mathit{even}}=1$ and $(-1)^{\mathit{odd}}=-1$.
We obtained non-trivial applications that could not be obtained via
the previously known techniques.

Our work leaves several questions open for further investigations.
For instance, we could not find a way to show $\sharpP$-completeness
of $\PerfMatch^{0,1}$ on \emph{bipartite }graphs by following the
outline of parity separation. Does this admit a complexity-theoretic
explanation or are we to blame? On another note, can parity separation
also be adapted to, say, proving $\CeqP$-completeness for other ``equality-testing''
versions of counting problems?

\bibliographystyle{plain}
\bibliography{References}

\end{document}